\newtheorem{theorem}{Theorem}
\newtheorem{lemma}[theorem]{Lemma}
\newtheorem{corollary}[theorem]{Corollary}
\newcommand{\numocc}{\#\mathit{occ}}
\newcommand{\SUS}{\mathsf{SUS}}
\newcommand{\LS}{\mathcal{LS}}
\newcommand{\MS}{\mathcal{MS}}
\newcommand{\RS}{\mathcal{RS}}
\title{Tight bounds on the maximum number of shortest unique substrings}
\author{
  Takuya Mieno
  \quad
  Shunsuke Inenaga
  \quad
  Hideo Bannai
  \quad
  Masayuki Takeda\\
  {Department of Informatics, Kyushu University}\\
  {\texttt{\{takuya.mieno,inenaga,bannai,takeda\}@inf.kyushu-u.ac.jp}}
}
\date{}
\begin{document}

\maketitle

\begin{abstract}
  A substring $Q$ of a string $S$ is called
  a shortest unique substring (SUS) for interval $[s,t]$ in $S$,
  if $Q$ occurs exactly once in $S$,
  this occurrence of $Q$ contains interval $[s,t]$,
  and every substring of $S$ which contains interval $[s,t]$ 
  and is shorter than $Q$ occurs at least twice in $S$.
  The SUS problem is, given a string $S$,
  to preprocess $S$ so that for any subsequent query interval $[s,t]$
  all the SUSs for interval $[s,t]$ can be answered quickly.
  When $s = t$, 
  we call the SUSs for $[s, t]$ as \emph{point SUSs},
  and when $s \leq t$, 
  we call the SUSs for $[s, t]$ as \emph{interval SUSs}.
  There exist optimal $O(n)$-time preprocessing scheme which answers
  queries in optimal $O(k)$ time for both point and interval SUSs,
  where $n$ is the length of $S$ and $k$ is the number of 
  outputs for a given query.
  In this paper, we reveal structural, combinatorial properties
  underlying the SUS problem:
  Namely, we show that the number of intervals in $S$ that correspond to
  point SUSs for all query positions in $S$ is less than $1.5n$, 
  and show that this is a matching upper and lower bound.
  Also, we consider the maximum number of intervals in $S$
  that correspond to interval SUSs for all query intervals in $S$.
\end{abstract}

\section{Introduction} \label{sec:intro}
\subsection{Shortest unique substring (SUS) problems} \label{subsec:intro}

A substring $Q$ of a string $S$ is called
a \emph{shortest unique substring} (\emph{SUS}) for 
interval $[s, t]$ in $S$, 
if 
(1) $Q$ occurs exactly once in $S$,
(2) this occurrence of $Q$ contains 
interval $[s,t]$, and 
(3) every substring of $S$ which contains interval $[s,t]$ and is shorter than 
    $Q$ occurs at least twice in $S$.
The \emph{SUS problem} is to preprocess a given string $S$
so that for any subsequent query 
interval $[s,t]$, SUSs for interval $[s,t]$ can be answered quickly.
When $s = t$,
a query $[s, t]$ refers to a single position in the string $S$,
and the problem is specifically called the \emph{point SUS problem}.
For clarity, when $s \leq t$,
the problem is called the \emph{interval SUS problem}.

Pei et al.~\cite{Pei} were the first to consider 
the point SUS problem, motivated by some applications in bioinformatics.
They considered two versions of this problem,
depending on whether a single point 
SUS has to be returned (the \emph{single point SUS problem}) or
all point SUSs have to be returned (the \emph{all point SUSs problem})
for a query position.

There is a series of research for the single point SUS problem.
Pei et al.~\cite{Pei} gave an $O(n^2)$-time 
preprocessing scheme which returns a single 
point SUS for a query position in $O(1)$ time, 
where $n$ is the length of the input string.
Tsuruta et al.~\cite{Tsuruta} and 
Ileri et al.~\cite{ileri14:_short_unique_subst_query_revis}
independently showed optimal $O(n)$-time preprocessing schemes 
which return a single 
point SUS for a query position in $O(1)$ time.
Hon et al.~\cite{HonTX15} proposed an \emph{in-place} algorithm
for the same version of the problem,
achieving the same bounds as the above solutions.

For the all point SUS problem which is more difficult,
Tsuruta et al.~\cite{Tsuruta} and 
Ileri et al.~\cite{ileri14:_short_unique_subst_query_revis}
also showed optimal algorithms 
achieving $O(n)$ preprocessing time and $O(k)$ query time,
where $k$ is the number of all point SUSs for a query point.

Hu et al.~\cite{Hu} were the first to consider
the interval SUS problem, and they
proposed an optimal algorithm for the the interval SUS problem, 
using $O(n)$ time for preprocessing and $O(k')$ time for queries,
where $k'$ is the number of interval SUSs for a query interval.
Recently, Mieno et al.~\cite{MienoIBT16} proposed an algorithm
which solves the interval SUS problem on strings 
represented by \emph{run-length encoding} (RLE).
If $r$ is the size of the RLE of a given string of length $n$,
then $r \leq n$ always holds.
Mieno et al.'s algorithm uses $O(r)$ space,
requires $O(r \log r)$ time to construct,
and answers all SUSs for a query interval
in $O(k' + \sqrt{\log r / \log\log r})$ time.

A substring $X$ of a string $S$ is said to be a 
\emph{minimal unique substring} (\emph{MUS}) of $S$, if
(i) $X$ occurs in $S$ exactly once and 
(ii) every proper substring of $X$ occurs at least twice in $S$.
All the above algorithms for the SUS problems
pre-compute all MUSs of the input string $S$
(or some data structure which is essentially equivalent to MUSs),
and extensively use MUSs to return the SUSs for a query position or interval.

Tsuruta et al.~\cite{Tsuruta} showed that the maximum number of 
MUSs contained in a string of length $n$ is at most $n$.
This immediately follows from the fact that MUSs do not nest.
Mieno et al.~\cite{MienoIBT16} proved that 
the maximum number of MUSs in a string is bounded by $2r-1$,
where $r$ is the size of the RLE of the string.
They also showed a series of strings which have $2r-1$ MUSs,
and hence this bound is tight.
These properties played significant roles in designing
efficient algorithms for the SUS problems.

On the other hand, structural properties of SUSs are not well understood.
A trivial upperbound for the maximum number of intervals
that correspond to point SUSs is $3n$, 
since every MUS can be a SUS for some position of the input string $S$,
and for each query position $p$~($1 \leq p \leq n$),
there can be at most $2$ SUSs that are not MUSs
(one that ends at position $p$ and the other that begins at position $p$).

\subsection{Our contribution}

The main contribution of this paper is matching upper and lower bounds 
for the maximum number of SUSs for the point SUS problem, 
which translate to ``less than $1.5n$ point SUSs''. 
Namely, we prove that any string of length $n$ contains at most $(3n-1)/2$ SUSs
for the point SUS problem.
We give a series of strings which contains 
$(3n-1)/2$ SUSs for any odd number $n \geq 5$.
Therefore, our bound is tight,
and to our knowledge, this is the first non-trivial result
for structural properties of SUSs.

We also consider the maximum number of SUSs for the interval SUS problem.
In so doing, we exclude a special case where
a query interval $[s,t]$ itself is a unique substring
that occurs exactly once in $S$.
This is because we have $\Theta(n^2)$ bounds for such trivial SUSs.
We then prove that any string of length $n$ contains less than $2n$
\emph{non-trivial} SUSs for the interval SUS problem. 
We also prove that there exists 
a string of length $n$ which contains 
$(2-\varepsilon)n$ non-trivial SUSs for any small number $\varepsilon > 0$.

\subsection{Related work}

Xu~\cite{LR} introduced the \emph{longest repeat} (\emph{LR}) problem.
An interval $[i,j]$ of a string $S$ is said to be an LR 
for interval $[s, t]$ if 
(a) the substring $R = S[i..j]$ occurs at least twice in $S$,
(b) the occurrence $[i,j]$ of $R$ contains $[s, t]$ and 
(c) there does not exist an interval $[i', j']$ of $S$
    such that $j'-i' > j-i$,
    the substring $S[i'..j']$ occurs at least twice in $S$,
    and the interval $[i', j']$ contains interval $[s, t]$.        
The point and interval LR problems are defined analogously
as the point and interval SUS problems, respectively.
    
Xu~\cite{LR} presented an optimal algorithm
which, after $O(n)$-time preprocessing,
returns all LRs for a given interval 
in $O(k'')$ time, where $k''$ is the number of output LRs.
He claimed that although the point/interval SUS problems and
the point/interval LR problems look alike,
these problems are actually quite different,
with a support from an example where
an SUS and LR for the same query point seem rather unrelated.

Our $(3n-1)/2$ bound for the 
maximum number of SUSs for the point SUS problem
also supports his claim in the following sense: 
In the preprocessing, Xu's algorithm computes the set of 
\emph{maximal repeats} (\emph{MR}).
An interval $[i, j]$ of a string $S$ is said to be an MR
if (A) the substring $W = S[i..j]$ occurs at least twice in $S$,
and (B) for any $1 \leq i' \leq i \leq j \leq j' \leq n$ with $j'-i' > j-i$, 
every superstring $Y = S[i'..j']$ of $W$ occurs once in $S$.
It is easy to see that the maximum number of MRs is bounded by $n$,
since for any position in $S$,
there can be at most one MR that begins at that position.
This bound is also tight:
any even palindrome consisting of $n/2$ distinct characters
contains $n$ intervals for which the corresponding
substrings are MRs
(e.g., for even palindrome $\mathtt{abcdeedcba}$ of length $10$,
any interval $[i,i]$ for $1 \leq i \leq 10$ is an MR).
By definition, any LR of string $S$ is also an MR of $S$.
Hence, the maximum number of LRs is also bounded by $n$.
Since the above lower bound for MRs with palindromes also applies to LRs,
this upper bound for LRs is also tight.
Thus, there is a gap of $(n-1)/2$ between the maximum numbers of SUSs and LRs.

\section{Preliminaries} \label{sec:pre}

\subsection{Notations} \label{subsec:notation}
Let $\Sigma$ be the alphabet.
An element of $\Sigma^*$ is called a string.
We denote the length of string $S$ by $|S|$.
The empty string is the string of length $0$.
For any string $S$ of length $n$ and integer $1 \leq i \leq n$,
let $S[i]$ denote the $i$th character of $S$.
For any $1 \leq i \leq j \leq n$,
let $S[i..j]$ denote the substring of $S$ that starts
at position $i$ and ends at position $j$ in $S$.
For convenience, $S[i..j]$ is the empty string if $i > j$.
For any strings $S$ and $w$,
let $\numocc_S(w)$ denote the number of occurrences of $w$ in $S$,
namely, $\numocc_S(w) = |\{i : S[i..i+|w|-1] = w\}|$.

\subsection{MUSs and SUSs} \label{subsec:MUSandSUS}
Let $S$ be any string of length $n$,
and $w$ be any non-empty substring of $S$.
We say that $w$ is a \emph{repeating substring} of $S$
iff $\numocc_S(w) \geq 2$,
and that $w$ is a \emph{unique substring} of $S$
iff $\numocc_S(w) = 1$.
Since any unique substring $w$ of $S$ occurs exactly once in $S$,
we will sometimes identify $w$ with its corresponding interval $[i, j]$
such that $w = S[i..j]$.
We also say that interval $[i,j]$ is unique
iff the corresponding $S[i..j]$ is a unique substring of $S$.

A unique substring $w = S[i..j]$ of $S$ is said to be
a \emph{minimal unique substring} (\emph{MUS})
iff any proper substring of $w$ is a repeating substring,
namely, $\numocc_S(S[i'..j']) \geq 2$ for any $i'$ and $j'$
with $i' \geq i$, $j' \leq j$, and $j'-i' < j-i$.
Let $\mathcal{M}_S$ be the set of all MUSs in $S$,
namely, $\mathcal{M}_S = \{[i,j] : \mbox{$S[i..j]$ is a MUS of $S$}\}$.
The next lemma follows from the definition of MUSs.

\begin{lemma}[\cite{Tsuruta}]\label{lem:sizeof_mus}
  No element of $\mathcal{M}_S$ is nested in another
  element of $\mathcal{M}_S$, namely,
  any two MUSs $[i,j], [k, \ell] \in \mathcal{M}_S$ satisfy
  $[i,j] \not \subset [k, \ell]$ and $[k, \ell] \not \subset [i, j]$.
  Therefore, $0 < |\mathcal{M}_S| \leq n$.
\end{lemma}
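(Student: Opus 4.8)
The plan is to establish the non-nesting property first, by a direct contradiction against the definition of MUS, and then to read off the counting bound $0 < |\mathcal{M}_S| \le n$ as a consequence. I would suppose, for contradiction, that some MUS $[i,j] \in \mathcal{M}_S$ is properly nested in another MUS $[k,\ell] \in \mathcal{M}_S$, that is, $[i,j] \subset [k,\ell]$. Then $S[i..j]$ is a proper substring of $S[k..\ell]$. Since $[k,\ell]$ is a MUS, clause (ii) of the MUS definition forces $\numocc_S(S[i..j]) \ge 2$, so $S[i..j]$ is a repeating substring. But $[i,j]$ is itself a MUS, hence a unique substring with $\numocc_S(S[i..j]) = 1$, a contradiction. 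By symmetry the same argument rules out $[k,\ell] \subset [i,j]$, so no MUS is nested in another.

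Next I would convert non-nesting into a bound on the number of distinct starting positions. If two distinct MUSs shared a common starting position $i$, say $[i,j]$ and $[i,\ell]$ with $j < \ell$, then $[i,j] \subset [i,\ell]$, contradicting non-nesting; hence all MUSs have pairwise distinct starting positions. Since a string of length $n$ has only $n$ possible starting positions, we conclude $|\mathcal{M}_S| \le n$. For the strict lower bound $|\mathcal{M}_S| > 0$, I would observe that $S$ itself satisfies $\numocc_S(S) = 1$ and is therefore a unique substring; repeatedly replacing a unique substring by a strictly shorter unique proper substring must terminate, because the length strictly decreases at each step, and the terminal substring is unique while all of its proper substrings repeat, i.e., a MUS. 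Thus $\mathcal{M}_S$ is nonempty.

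I do not anticipate a genuine obstacle: the whole argument rests on unwinding the MUS definition. The one point requiring slight care is ensuring that proper nesting $[i,j] \subset [k,\ell]$ (with $[i,j] \neq [k,\ell]$) really makes $S[i..j]$ a \emph{proper} substring of $S[k..\ell]$, which is precisely the hypothesis invoked in clause (ii); once that is checked, the remainder is routine bookkeeping on intervals and their starting positions.
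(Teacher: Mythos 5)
Your proof is correct and is exactly the argument the paper intends: the paper omits the proof, citing Tsuruta et al.\ and noting that the lemma ``follows from the definition of MUSs,'' and your unwinding of clause (ii) for non-nesting, the distinct-starting-positions count for $|\mathcal{M}_S| \le n$, and the descent argument for non-emptiness supply precisely the routine details being elided.
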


For any substring $S[i..j]$ and an interval $[s,t]$ in $S$,
$S[i..j]$ is said to be a \emph{shortest unique substring} (\emph{SUS})
for interval $[s,t]$ iff
(1) $S[i..j]$ is a unique substring of $S$,
(2) $[s, t] \subset [i, j]$, and
(3) $S[i'..j']$ is a repeating substring of $S$
    for any $i',j'$ with $[s,t] \subset [i', j']$
    and $j'-i' < j-i$.
In particular, for any interval $[p,p]$ of length $1$ in $S$,
$S[i..j]$ is said to be a SUS containing position $p$.
We say that an SUS for some position is a \emph{point} SUS, 
and a SUS for some interval (including those of length $1$) 
is an \emph{interval} SUS.

Clearly, if $[s,t]$ is unique,
then $[s,t]$ is the only SUS for the interval $[s,t]$.
For any interval $[s, t] \subset [1, |S|]$, 
if $s \neq t$ and $[s,t]$ is unique, 
we say that $[s, t]$ is a \emph{trivial} SUS. 
Also, we say that $[s,t]$ is a \emph{non-trivial} SUS 
if $[s, t]$ is not a trivial SUS.
For any interval $[s,t] \subset [1, |S|]$,
let $\SUS_S([s,t])$ denote the set of interval SUSs of $S$ 
that contain query interval $[s,t]$, 
and $\mathcal{IS}_S$ the set of all non-trivial SUSs of $S$.
Also, for any position $p \in [1, |S|]$,
let $\SUS_S(p)$ denote the set of 
point SUSs of $S$ that contain query position $p$, 
and $\mathcal{PS}_S$ the set of all point SUSs of $S$,
namely, $\mathcal{PS}_S = \bigcup^{n}_{p=1} \SUS_S(p)$.
Fig.~\ref{fig:MUSandSUS}
shows examples of MUSs and SUSs.

\begin{figure}[tbp] 
 \centerline{
 \includegraphics[width=60mm]{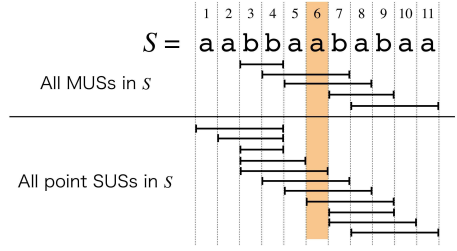}
 }
 \caption{
   For string $S = \mathtt{aabbaababaa}$,
   the set $\mathcal{M}_S = \{[3..4], [4..7], [5..8], [7..9], [8..11]\} = 
   \{\mathtt{bb}, \mathtt{baab}, \mathtt{aaba}, \mathtt{bab}, \mathtt{abaa}\}$
   of all MUSs of $S$ is shown in the upper part of the diagram.
   The set $\mathcal{PS}_S$ of all SUSs for all positions of string $S$
   is shown in the lower part of the diagram.
   For example, the intervals 
   $[3..6] = \mathtt{bbaa}$, 
   $[4..7] = \mathtt{baab}$, 
   $[5..8] = \mathtt{aaba}$, and 
   $[6..9] = \mathtt{abab}$ are SUSs for query position $6$,
   where the first SUS $[3..6]$ is obtained by extending
   the right-end of MUS $[3..4]$ up to position $6$,
   the second SUS $[4..7]$ and the third $[5..8]$ are MUSs of $S$,
   and the fourth SUS $[6..9]$ is obtained by extending
   the left-end of MUS $[8..11]$ up to position $6$.
 }
\label{fig:MUSandSUS}
\end{figure}

Hu et al.~\cite{Hu} showed that
it is possible to preprocess a given string $S$ of length $n$
in $O(n)$ time
so that later, we can return all SUSs that contain
a query interval $[s,t]$ in $O(k)$ time,
where $k$ is the number of such SUSs.

As is shown in Lemma~\ref{lem:sizeof_mus},
the number of MUSs in any string $S$ of length $n$ is bounded by $n$.
In this paper, we show that the number of point SUSs in $S$
is less than $1.5n$, more precisely, $|\mathcal{PS}_S| \leq (3n-1)/2$.
We will do so by first showing two different bounds on $|\mathcal{PS}_S|$ 
in terms of the number $|\mathcal{M}_S|$ of MUSs in the string $S$,
and then merging these two results that lead to the claimed bound.
Moreover, this bound is indeed tight, namely,
we show a series of strings containing $(3n-1)/2$ SUSs.
In addition, 
we show that the number of non-trivial SUSs in $S$
is less than $2n$, namely, $|\mathcal{IS}_S| < 2n$.
We also prove that there exists 
a string of length $n$ which contains 
$(2-\varepsilon)n$ non-trivial SUSs 
for any small number $\varepsilon > 0$.

\section{Bounds on the number of point SUSs} \label{sec:bounds}

Here we show a tight bound
for the maximum number of point SUSs in a string.
In this section, whenever we speak of SUSs,
we mean point SUSs (those for the point SUS problem).

\subsection{Upperbound A} \label{subsec:uboundA}

In this subsection, we show our first upperbound 
on the number of SUSs in a string $S$.
In so doing, we define the subsets $\LS_S$, $\MS_S$, 
and $\RS_S$ of the set $\mathcal{PS}_S$ of all SUS of string $S$ by
\begin{align*}
\LS_S
 =&~
 \mathcal{PS}_S \cap \{[x,y] \not\in \mathcal{M}_S : x < \exists i\leq y~[i,y]\in\mathcal{M}_S\}, \\
\MS_S =&~ \mathcal{PS}_S \cap \mathcal{M}_S, \mbox{ and} \\
\RS_S
 =&~
 \mathcal{PS}_S \cap \{[x,y] \not\in \mathcal{M}_S : x \leq \exists j < y~[x,j]\in\mathcal{M}_S\}.
\end{align*}
Intuitively, $\LS_S$ is the set of SUSs of $S$
which are \emph{not} MUSs of $S$ and can be obtained by
extending the beginning positions of some MUSs 
to the left up to query positions,
$\MS_S$ is the set of SUSs of $S$
which are also MUSs of $S$,
and $\RS_S$ is the set of SUSs of $S$
which are \emph{not} MUSs of $S$ and can be obtained by
extending the ending positions of some MUSs 
to the right up to query positions.

It follows from their definitions that
$\LS_S \cap \MS_S = \phi$, 
$\MS_S \cap \RS_S = \phi$, 
$\RS_S \cap \LS_S = \phi$ 
and that
$\mathcal{PS}_S = 
\LS_S \cup \MS_S \cup \RS_S$.

Figure~\ref{fig:3TypeSUSs} in the next subsection shows 
examples of $\LS_S$, $\MS_S$, and $\RS_S$
for string $S = \mathtt{aabbaababaa}$.
Also compare it with Figure~\ref{fig:MUSandSUS} which shows 
$\mathcal{PS}_S$ for the same string $S$.

In the proof of the following theorem,
we will evaluate the sizes of these three sets 
$\LS_S$, $\MS_S$, and $\RS_S$ separately.

\begin{theorem} \label{thm:ubound1}
For any string $S$, 
$|\mathcal{PS}_S| \leq 2|S|-|\mathcal{M}_S|$.
\end{theorem}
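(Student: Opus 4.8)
The plan is to bound each of the three pairwise-disjoint sets $\LS_S$, $\MS_S$, $\RS_S$ in the decomposition $\mathcal{PS}_S = \LS_S \cup \MS_S \cup \RS_S$ separately, establishing
\[
|\MS_S| \le |\mathcal{M}_S|, \qquad |\LS_S| \le |S| - |\mathcal{M}_S|, \qquad |\RS_S| \le |S| - |\mathcal{M}_S|,
\]
whose sum is exactly $2|S| - |\mathcal{M}_S|$. The first inequality is immediate from $\MS_S \subseteq \mathcal{M}_S$. The two remaining bounds are symmetric, so I would prove the bound for $\RS_S$ and then mirror the argument, reflecting the roles of left and right endpoints, to obtain the bound for $\LS_S$.

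For $\RS_S$ I will use a charging argument that sends each right-extension SUS to its own right endpoint. Fix $[x,y] \in \RS_S$; by definition $[x,y] \notin \mathcal{M}_S$ and there is a MUS $[x,j]$ with $j < y$ sharing its left endpoint. First I claim that the only position for which $[x,y]$ can be a SUS is $p = y$. Indeed, if $[x,y]$ were a SUS for some $p$ with $p < j$, then the MUS $[x,j]$ would be a strictly shorter unique substring containing $p$, a contradiction; and if $p \ge j$ (so $j \le p \le y$), then $[x,p]$, being a superstring of the unique substring $[x,j]$, is itself unique, contains $p$, and has length $p - x + 1 \le y - x + 1$, which forces $p = y$ lest a shorter unique substring containing $p$ exist. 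Hence every element of $\RS_S$ is a SUS for exactly the single position equal to its right endpoint.

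Next I will argue that this charging is injective and avoids the right endpoints of MUSs. Injectivity follows from the standard fact that all SUSs for a fixed position share the same minimum length: two elements of $\RS_S$ with a common right endpoint $y$ are both SUSs for $y$, hence of equal length, hence identical. To see that $y$ is never the right endpoint of a MUS, suppose $[x',y] \in \mathcal{M}_S$ and compare it with the SUS $[x,y]$ for position $y$: if $[x',y]$ is shorter it contradicts $[x,y]$ being a SUS; if it has equal length it equals $[x,y]$, contradicting $[x,y]\notin\mathcal{M}_S$; and if it is longer it properly contains the unique substring $[x,y]$, which is impossible for a MUS. Since distinct MUSs cannot share a right endpoint (otherwise one would nest in the other, contradicting Lemma~\ref{lem:sizeof_mus}), there are exactly $|\mathcal{M}_S|$ positions that are right endpoints of MUSs, and all of these are forbidden targets. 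Therefore $\RS_S$ injects into the $|S| - |\mathcal{M}_S|$ remaining positions, giving $|\RS_S| \le |S| - |\mathcal{M}_S|$; the mirror-image argument (charging to left endpoints and excluding left endpoints of MUSs) yields $|\LS_S| \le |S| - |\mathcal{M}_S|$, and summing the three bounds completes the proof.

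I expect the main obstacle to be the \emph{anchoring} step: proving that a right-extension SUS can be a SUS only for its right endpoint, and that this endpoint is disjoint from every MUS right endpoint. This is where the three defining properties must be played against one another: the minimality of the extended MUS (to rule out interior query positions), the fact that superstrings of unique substrings remain unique (to produce the competing shorter candidate), and the non-nesting of MUSs together with the equal-length property of co-positional SUSs (to secure both disjointness and injectivity). Everything else is bookkeeping.
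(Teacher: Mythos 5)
Your proof is correct, and while it uses the same decomposition $\mathcal{PS}_S = \LS_S \cup \MS_S \cup \RS_S$ with the same three target bounds ($|\MS_S| \le |\mathcal{M}_S|$ and $|\LS_S|, |\RS_S| \le |S| - |\mathcal{M}_S|$), your mechanism for the latter two bounds is genuinely different from the paper's. The paper groups the elements of $\RS_S$ by their left endpoints (each is the start $b_i$ of some MUS), observes that the admissible right endpoints for the group at $b_i$ lie strictly between $e_i$ and $e_{i+1}$, and telescopes $\sum_{i=1}^{m-1}(e_{i+1}-e_i-1) + (n-e_m) \le n-m$. You instead charge each element of $\RS_S$ to its right endpoint, prove this charging is injective (via the anchoring claim that an element of $\RS_S$ is a SUS only for its right endpoint, combined with the equal-length property of SUSs for a common position), and show the $m$ distinct right endpoints of MUSs are never hit. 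Your anchoring step is precisely the paper's Lemma~\ref{lem:propertyofS}, which the paper proves separately and deploys only for Theorem~\ref{thm:ubound2}; your route makes Theorem~\ref{thm:ubound1} depend on it as well. This is harmless and arguably cleaner: a single global injection into $\{1,\dots,n\}$ minus the MUS right endpoints replaces a telescoping sum, and it makes explicit the distinctness of right endpoints across all of $\RS_S$, which the paper's accounting only obtains implicitly through the per-gap counting. All the individual claims you make (the three-way case analysis against a MUS sharing the right endpoint, the non-nesting of MUSs giving $m$ distinct forbidden positions, and the mirror argument for $\LS_S$) check out.
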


\begin{proof}
Let $n = |S|$ and $m = |\mathcal{M}_S|$.
For any $1 \leq i \leq m$,
let $[b_i, e_i]$ denote the MUS of $S$
that has the $i$th smallest beginning position in $\mathcal{M}_S$.

It is clear that $|\MS_S| \leq m$.
Note that the inequality is due to that fact that 
some MUS may not be a point SUS for any position in $S$
(such a MUS is called \emph{meaningless} in the literature~\cite{Tsuruta}).

Next, we consider the size of $\RS_S$.
By definition, for any $[x, y] \in \RS_S$,
$x$ is equal to the beginning position of a MUS of $S$.
Therefore, we can bound $|\RS_S|$ by summing up
the number of SUSs that begin with $b_i$ for every $[b_i, e_i] \in \mathcal{M}_S$.
For any $1 \leq i \leq m-1$, consider two adjacent MUSs
$[b_i, e_i], [b_{i+1}, e_{i+1}] \in \mathcal{M}_S$.
Recall that $b_i < b_{i+1}$.
Then, for any $j \geq e_{i+1}$,
the interval $[b_i, j]$ contains both MUSs $[b_i, e_i]$ 
and $[b_{i+1}, e_{i+1}]$.
This implies that $[b_i, j] \not \in \mathcal{PS}_S$ (see Figure \ref{fig:RS}),
since otherwise both $[b_i, j]$ and $[b_{i+1}, j]$ are SUSs for position $j$,
a contradiction.
Thus, for any $[b_i, e_i] \in \mathcal{M}_S$ with $1 \leq i \leq m-1$,
the number of SUSs that begin with $b_i$ and belong to $\RS_S$
is at most $e_{i+1}-e_{i}-1$.
Also, the number of SUSs that begin with $b_m$ 
and belong to $\RS_S$ is at most $n-e_m$.
Consequently, we get
$|\RS_S| = \sum_{i=1}^{m-1}(e_{i+1}-e_{i}-1) + n-e_m = e_m - e_1 - (m-1) + n - e_m \leq n-m$.

A symmetric argument gives us the same bound for $|\LS_S|$,
namely, $|\LS_S| \leq n-m$.
Overall, we obtain
$|\mathcal{PS}_S| = |\LS_S| + |\MS_S| + |\RS_S| \leq 2(n-m) + m  = 2n-m$.
\end{proof}

\begin{figure}[tb] 
 \centerline{
  \includegraphics[scale=0.3]{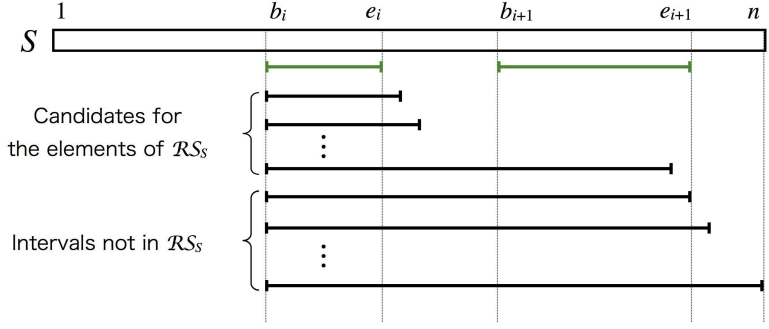}
 }
 \caption{
   Illustration for Theorem~\ref{thm:ubound1}.
   Consider two adjacent MUSs $[b_i, e_i]$ and $[b_{i+1}, e_{i+1}]$
   depicted as the two intervals on the top.
   For any $e_i < e < e_{i+1}$, $[b_i, e]$ can be 
   an element of $\RS_S$.
   On the other hand, for any $e' \geq e_{i+1}$,
   $[b_i, e']$ can never be an element of $\mathcal{PS}_{S}$
   since $[b_i, e']$ contains two distinct MUSs
   $[b_i, e_i]$ and $[b_{i}, e_{i+1}]$,
   and hence $[b_i, e']$ can never be an element of $\RS_S$ as well.
 }
\label{fig:RS}
\end{figure}

\subsection{Upperbound B} \label{subsec:uboundB}

In this subsection, we provide another upperbound
on the size of $\mathcal{PS}_S$.

\begin{theorem}\label{thm:ubound2}
For any string $S$,
$|\mathcal{PS}_S| \leq |S|+|\mathcal{M}_S|-1$.
\end{theorem}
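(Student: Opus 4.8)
The plan is to refine the decomposition $\mathcal{PS}_S = \LS_S \cup \MS_S \cup \RS_S$ used for Theorem~\ref{thm:ubound1}, but to charge the extension-type SUSs to \emph{positions} rather than to adjacent MUSs, since the latter is exactly what loses a factor in Upperbound~A. I would first isolate two structural facts. The first is already implicit in the proof of Theorem~\ref{thm:ubound1}: every point SUS contains exactly one MUS (an interval containing two MUSs cannot be shortest-unique for any position, by the argument given there), which makes the three sets genuinely disjoint. The second, and the real lever, is a \emph{single-anchor} lemma: a right-extension SUS $[b_i,y]$ with $y>e_i$ is a SUS for exactly one position, namely its right end $y$. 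Indeed $[b_i,y-1]$ still contains the MUS $[b_i,e_i]$ (as $y-1\ge e_i$), hence is unique and strictly shorter, so it forbids $[b_i,y]$ from being shortest-unique at any position in $[b_i,y-1]$; only $y$ survives. Symmetrically, each element of $\LS_S$ is a SUS for exactly its left end.

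With these, I would reformulate the count. Writing $\mathcal{R}=\{p : p \text{ is the right end of some element of }\RS_S\}$ and $\mathcal{L}=\{p : p \text{ is the left end of some element of }\LS_S\}$, the single-anchor lemma turns $\RS_S\to\mathcal{R}$ and $\LS_S\to\mathcal{L}$ into bijections, so $|\mathcal{PS}_S| = |\MS_S| + |\mathcal{R}| + |\mathcal{L}|$ with $|\MS_S|\le m$, where $m=|\mathcal{M}_S|$. Since $\mathcal{R},\mathcal{L}\subseteq[1,|S|]$, it then remains to prove $|\mathcal{R}|+|\mathcal{L}|\le |S|-1$; equivalently, letting $D$ be the number of positions in $\mathcal{R}\cap\mathcal{L}$ and $Z$ the number in neither set, it suffices to show $Z\ge D+1$, because $|\mathcal{R}|+|\mathcal{L}| = |S| - Z + D$.

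The core is a crossover argument bounding $D$. A position $p\in\mathcal{R}\cap\mathcal{L}$ carries both a right-extension $[b_i,p]$ and a left-extension $[p,e_j]$ as SUSs; these share the length of the SUS at $p$, forcing $p-b_i+1 = e_j-p+1$, so $p$ is the exact midpoint $(b_i+e_j)/2$ and lies strictly between the MUSs $[b_i,e_i]$ and $[b_j,e_j]$. I would argue that each such \emph{double} position is the unique crossover point of the gap in which it lies, so distinct doubles arise from distinct gaps, whence $D\le m-1$; meanwhile each gap, together with the region left of the first MUS and right of the last MUS, yields a position served only by a MUS, supplying the surplus of $Z$ over $D$. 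I expect the main obstacle to be making this final accounting airtight when MUSs have different lengths and overlap: a right-extension of one MUS can reach into the span of a longer neighbouring MUS, and a single position can be served simultaneously by a MUS and by an equally long extension (a tie), so the clean ``one crossover and one MUS-only position per gap'' picture must be replaced by a global summation identity over the contiguous ranges comprising $\mathcal{R}$ and $\mathcal{L}$ that correctly absorbs these ties and the contribution of any \emph{meaningless} MUSs (which only weaken the target $Z\ge D+1$, since they lower $|\MS_S|$ below $m$).
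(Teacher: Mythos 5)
Your setup is sound, and in substance it coincides with the paper's: your ``single-anchor'' lemma is Lemma~\ref{lem:propertyofS}, the injectivity of the maps $\RS_S\to\mathcal{R}$ and $\LS_S\to\mathcal{L}$ is Cases (1)--(3) of Lemma~\ref{lem:sizeof_finv}, and charging each SUS to one endpoint is exactly the paper's function $f$. The gap is that the one step that actually carries the theorem --- your inequality $Z\ge D+1$, equivalently $|\mathcal{R}|+|\mathcal{L}|\le|S|-1$ --- is left as a picture (``one crossover and one MUS-only position per gap'') that you yourself concede is not airtight, and this is precisely where the paper does its real work. Two concrete problems. First, your $D=|\mathcal{R}\cap\mathcal{L}|$ is not the quantity the paper controls: the paper bounds the number of positions at which an element of $\RS_S$ \emph{ends} and an element of $\LS_S\cup\MS_S$ \emph{begins} (Lemma~\ref{lem:sizeof_finv2}), and proves this is at most $|\mathcal{M}_S|-1$ by an explicit injection of those positions into $\{b_1,\dots,b_{m-1}\}$ (Lemma~\ref{lem:sizeof_U}); the injectivity there is a genuine argument comparing the SUS lengths $l_1>l_2$ at two positions sharing a predecessor, not a consequence of your midpoint observation. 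Your ``unique crossover per gap'' claim for doubles \emph{is} essentially that lemma and needs that proof.

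Second, even granting $D\le m-1$, you still owe the lower bound on $Z$, and your accounting sketch fails already on the extremal string $a_1xa_2x\cdots a_k$: there every gap between consecutive MUSs consists of a single position which is a double (hence not in $Z$), and the $Z$-positions are the MUS positions themselves, not positions ``yielded by the gaps.'' More generally, a candidate $Z$-position such as the start $b_{i+1}$ of the next MUS can itself lie in $\mathcal{R}$, because an $\RS_S$-element may end exactly where a MUS begins --- these are exactly the collisions your $D$ does not count but the paper's set $U$ does. Finally, your parenthetical hope that meaningless MUSs ``only weaken the target'' is backwards: slack in $|\MS_S|\le m$ rescues you only if you prove the quantitative trade-off $Z\ge D+1-(m-|\MS_S|)$, which you have not. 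The paper sidesteps all of this by charging MUSs to positions as well, so that the single statement $|U|\le m-1$ suffices; I would adopt that accounting rather than trying to repair the $\mathcal{R}/\mathcal{L}/Z$ bookkeeping.
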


In order to show Theorem~\ref{thm:ubound2},
we will use a function $f : \mathcal{PS}_S \rightarrow \{1, 2, \ldots, n\}$
and its inverse image
$f^{-1}: \{1, 2, \ldots, n\} \rightarrow 2^{\mathcal{PS}_S}$.
The next lemma is useful to define $f$ and $f^{-1}$.
\begin{lemma}\label{lem:propertyofS}
For any string $S$ and interval $[x, y]$ such that $1 \leq x \leq y \leq |S|$,
if $[x, y] \in \RS_S$ then $[x, y] \in \SUS_S(y)$,
and   
if $[x, y] \in \LS_S$ then $[x, y] \in \SUS_S(x)$.
\end{lemma}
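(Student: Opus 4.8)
The plan is to prove the $\RS_S$ implication in full and then obtain the $\LS_S$ implication by a symmetric (left--right mirrored) argument, so I focus on the former. Suppose $[x,y] \in \RS_S$. By the definition of $\RS_S$, two facts are available: first, $[x,y] \in \mathcal{PS}_S$, so there is at least one position $p$ with $x \le p \le y$ and $[x,y] \in \SUS_S(p)$; second, there is a MUS $[x,j] \in \mathcal{M}_S$ with $x \le j < y$. The goal is to show that the (a priori unknown) witness position $p$ is forced to equal $y$, which immediately yields $[x,y] \in \SUS_S(y)$.

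First I would pin down $p$ from the left using the short MUS $[x,j]$. If $p \le j$, then since $x \le p \le j$ the MUS $[x,j]$ is a unique interval containing $p$ of length $j-x+1 < y-x+1$; this contradicts condition (3) in the definition of $\SUS_S(p)$, which requires every interval containing $p$ that is shorter than $[x,y]$ to be repeating. Hence $p > j$.

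Next I would pin down $p$ from the right. Because $p > j$, we have $[x,j] \subseteq [x,p]$, and since a superstring of a unique substring is again unique, $[x,p]$ is a unique interval containing $p$ of length $p-x+1$. The minimality of $[x,y]$ as a SUS for $p$ then forces $p-x+1 \ge y-x+1$, i.e.\ $p \ge y$; combined with $p \le y$ this gives $p = y$, so $[x,y] \in \SUS_S(y)$. The $\LS_S$ case is identical after exchanging the roles of the two endpoints: the MUS $[i,y]$ with $i > x$ first forces $p < i$, whence $[p,y] \supseteq [i,y]$ is unique and minimality forces $p = x$, giving $[x,y] \in \SUS_S(x)$.

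I do not anticipate a serious obstacle: the argument uses only the definitions of SUS and MUS together with the elementary fact that extending a unique substring keeps it unique, and in particular it does not even need the non-nesting property of Lemma~\ref{lem:sizeof_mus}. The only points requiring care are the bookkeeping of the containment $[x,j] \subseteq [x,p]$ (which is precisely why the preliminary step $p > j$ must be established first) and applying the uniqueness-under-extension fact in the correct direction.
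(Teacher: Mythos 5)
Your proof is correct and follows essentially the same route as the paper: the paper assumes $p<y$ for contradiction and takes $z=\max\{p,q\}$ (where $[x,q]$ is the embedded MUS), so that $[x,z]$ is a unique, shorter interval containing $p$ --- which is exactly your two-step case split ($p\le j$ versus $p>j$) compressed into one line. No gaps.
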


\begin{proof}
We first prove the former case.
Assume on the contrary that
some $[x, y] \in \RS_S$ satisfies $[x, y] \not\in \SUS_S(y)$.
This implies that there exists a position $p$ in $S$ such that
$x \leq p < y$ and $[x, y] \in \SUS_S(p)$.
In addition, since $[x, y] \in \RS_S$,
there exists a position $q$
such that $x \leq q < y$ and $[x, q] \in \mathcal{M}_S$.
Let $z = \max\{p, q\}$.
Then, $S[x..z]$ is a unique substring of $S$
which is shorter than $S[x..y]$ and contains position $p$.
However, this contradicts that $S[x..y]$ is a SUS for position $p$.
Thus, if $[x, y] \in \RS_S$ then $[x, y] \in \SUS_S(y)$.
The latter case is symmetric and thus can be shown similarly.
\end{proof}

We are now ready to define $f$:
\[
f([x, y]) = 
\begin{cases}
    x & \mbox{if }[x,y] \in \LS_S \cup \MS_S,\\
    y & \mbox{if }[x,y] \in \RS_S.
\end{cases}
\]
Intuitively, the function $f$ charges a given interval $[x, y]$
to its beginning position $x$
if $[x, y]$ is an element of $\mathcal{M}_S \cap \mathcal{PS}_S$ or 
if $[x, y]$ is an element of $\SUS_S(p)$ for some query position $p$
which is obtained by extending the left-end of a MUS to the left
up to $p$.
On the other hand, it charges $[x, y]$ to
its ending position $y$ if the interval is an element of $\SUS_S(p)$
for some query position $p$
which is obtained by extending the right-end of a MUS to the right
up to $p$.
Figure \ref{fig:3TypeSUSs} shows examples for how
the function $f$ charges given interval $[x, y] \in \mathcal{PS}_S$.

We also define the inverse image $f^{-1}$ of $f$ as follows:
\[
f^{-1}(u) = \{[x,y]\in \mathcal{PS}_S~:~f([x,y]) = u\}.
\]
For positions $u$ for which there is no element $[x, y]$ in $\mathcal{PS}_S$
satisfying $f([x, y]) = u$, let $f^{-1}(u) = \emptyset$.
See also Figure \ref{fig:3TypeSUSs} for examples of $f^{-1}$.

By the definition of $f^{-1}$, it is clear that
$|\mathcal{PS}_S| = \sum^{|S|}_{u=1}|f^{-1}(u)|$.
Hence, in what follows we analyze $|f^{-1}(u)|$
for all positions $u$ in string $S$.

\begin{figure}[tb] 
 \centerline{
  \includegraphics[width=100mm]{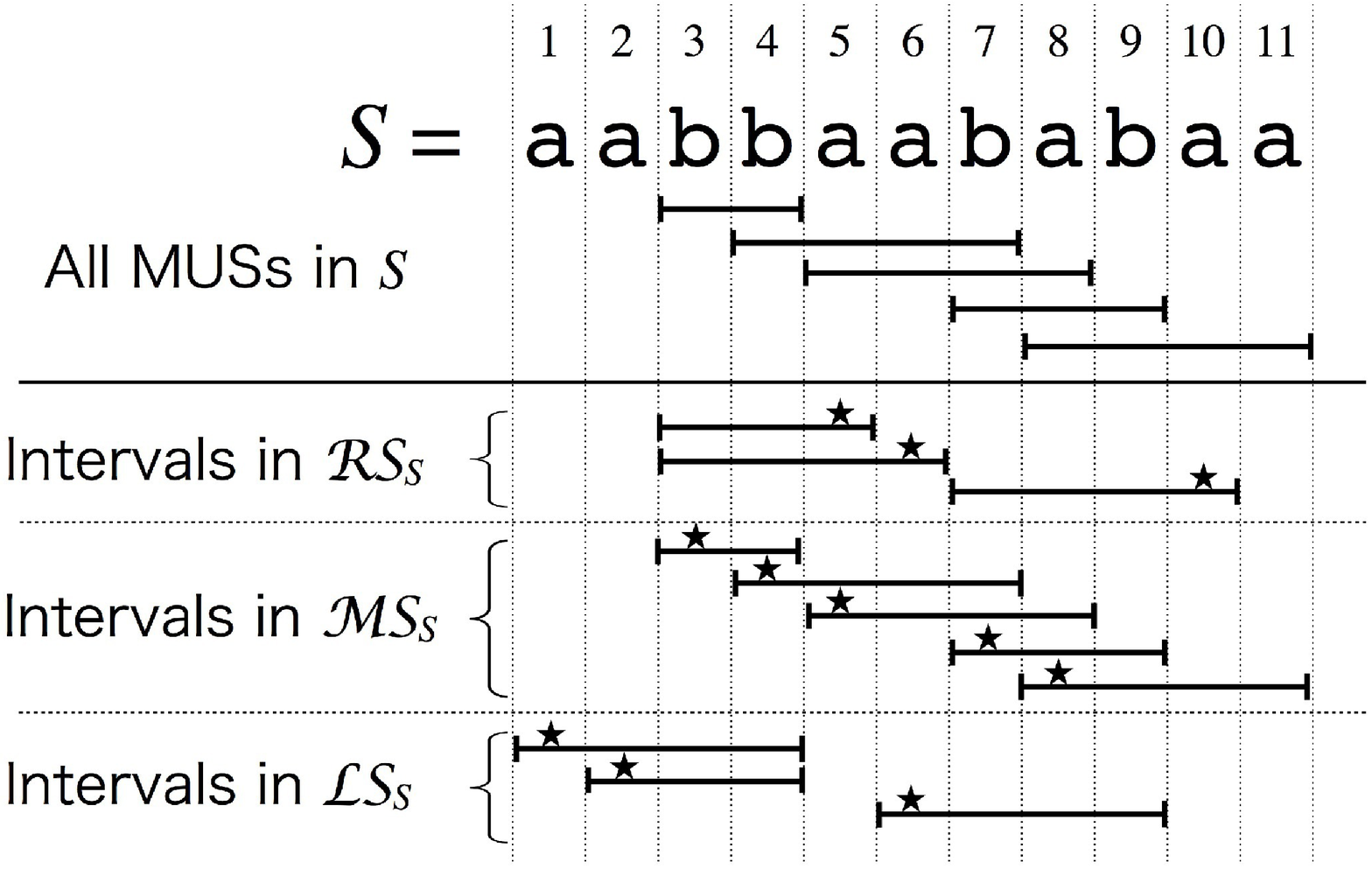}
 }
 \caption{
   Illustration for functions $f$ and $f^{-1}$
   of string $S = \mathtt{aabbaababaa}$.
   The upper part of this diagram shows all MUSs in $S$,
   and the lower part shows all SUSs for all positions in $S$.
   Each star shows the position to which the function $f$ maps
   the corresponding interval. 
   Here, $\RS_S = \{ [3,5], [3,6], [7,10] \}$, 
   $\MS_S = \{ [3,4], [4,7], [5,8], [7,9], [8,11]\}$, and 
   $\LS_S = \{ [1,4], [2,4], [6,10]\}$.
   Hence, we have
   $f([3,5]) = 5$, $f([3,6]) = 6$, $f([7,10]) = 10$, 
   $f([3,4]) = 3$, $f([4,7]) = 4$, $f([5,8]) = 5$, $f([7,9]) = 7$, $f([8,11]) = 8$, 
   $f([1,4]) = 1$, $f([2,4]) = 2$, and $f([6,10]) = 6$.
   For the inverse image, $f^{-1}$, we have $f^{-1}(1) = \{[1,4]\}$,
   $f^{-1}(2) = \{[2,4]\}$, $f^{-1}(3) = \{[3,4]\}$, $f^{-1}(4) = \{[4,7]\}$,
   $f^{-1}(5) = \{[3,5], [5,8]\}$, $f^{-1}(6) = \{[3,6], [6,10]\}$,
   $f^{-1}(7) = \{[7,9]\}$, $f^{-1}(8) = \{[8,11]\}$, 
   $f^{-1}(9) = f^{-1}(11) = \emptyset$, and $f^{-1}(10) = \{[7,10]\}$.
 }
\label{fig:3TypeSUSs}
\end{figure}

\begin{lemma}\label{lem:sizeof_finv}
For any string and position $1 \leq u \leq |S|$,
$|f^{-1}(u)| \leq 2$.
\end{lemma}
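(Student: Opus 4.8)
I need to show that for every position $u$, the fiber $f^{-1}(u)$ has at most two elements. The plan is to analyze which intervals of $\mathcal{PS}_S$ can be mapped to a fixed $u$, splitting according to the three-way partition $\mathcal{PS}_S = \LS_S \cup \MS_S \cup \RS_S$ and the definition of $f$. An interval $[x,y]$ lands on $u$ either because $x = u$ (when $[x,y] \in \LS_S \cup \MS_S$) or because $y = u$ (when $[x,y] \in \RS_S$). So $f^{-1}(u)$ consists of intervals that either \emph{begin} at $u$ (coming from $\LS_S \cup \MS_S$) or \emph{end} at $u$ (coming from $\RS_S$).

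**Key steps.** First I would bound the number of intervals in $f^{-1}(u)$ that begin at $u$. By Lemma~\ref{lem:propertyofS}, any $[u,y] \in \LS_S$ satisfies $[u,y] \in \SUS_S(u)$, and any $[u,y] \in \MS_S$ is a MUS that is also a point SUS — and by Lemma~\ref{lem:sizeof_mus} MUSs do not nest, so at most one MUS begins at $u$. The crucial observation is that all SUSs for a \emph{fixed} query position have the \emph{same length} (this is immediate from the definition of SUS: condition (3) forces minimality, so any two SUSs for the same point are equally long). Hence among intervals beginning at $u$ that are SUSs for position $u$, they all share the same right endpoint, so there is at most one such interval. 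The second step is the symmetric count for intervals ending at $u$: by Lemma~\ref{lem:propertyofS}, any $[x,u] \in \RS_S$ is in $\SUS_S(u)$, and by the same equal-length property of point SUSs, all SUSs for position $u$ ending at $u$ share the same left endpoint, giving at most one such interval. Combining, $|f^{-1}(u)| \le 1 + 1 = 2$.

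**Where the care is needed.** The main obstacle is making rigorous that the "begin at $u$" contributions collapse to a single interval. Intervals in $\LS_S$ beginning at $u$ and the (unique) MUS in $\MS_S$ beginning at $u$ must be reconciled: I would argue that since all elements of $\SUS_S(u)$ have the same length, and everything landing on $u$ from the left is in $\SUS_S(u)$ (via Lemma~\ref{lem:propertyofS} for $\LS_S$, and directly for the MUS case since a MUS that is a point SUS is a SUS for the position equal to its own start), they cannot coexist as distinct intervals unless they have different lengths — which is impossible. The subtle point to verify is that a MUS $[u,e] \in \MS_S$ is indeed in $\SUS_S(u)$: being a unique substring containing position $u$ and being a SUS for \emph{some} position, its minimality guarantees it is a SUS for its own starting point $u$. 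Thus all left-charged intervals coincide, and the bound $|f^{-1}(u)| \le 2$ follows. I expect the write-up to hinge entirely on the equal-length-of-point-SUSs fact together with the non-nesting of MUSs from Lemma~\ref{lem:sizeof_mus}.
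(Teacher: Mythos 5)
Your overall decomposition (at most one interval charged to $u$ from the left, at most one from the right) is the same as the paper's, and most of your cases are sound: the equal-length property of point SUSs correctly rules out two elements of $\LS_S$ beginning at $u$, two elements of $\RS_S$ ending at $u$, and non-nesting rules out two MUSs beginning at $u$. However, there is a genuine gap in the one place you yourself flag as subtle: your claim that a MUS $[u,e]\in\MS_S$ is necessarily in $\SUS_S(u)$ is \emph{false}. Being a SUS for \emph{some} position $p\in[u,e]$ does not make it a SUS for position $u$, because a shorter unique substring straddling $u$ from the left may exist. The paper's own running example $S=\mathtt{aabbaababaa}$ refutes it: $[4,7]$ is a MUS and is a SUS for position $6$, hence $[4,7]\in\MS_S$ and $f([4,7])=4$, yet $[4,7]\notin\SUS_S(4)$ since the MUS $[3,4]$ is a unique substring of length $2$ containing position $4$. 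Consequently you cannot place all left-charged intervals inside $\SUS_S(u)$ and invoke equal lengths to collapse them to one; the argument breaks exactly for the coexistence of an $\LS_S$-element and an $\MS_S$-element both beginning at $u$.

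The paper closes this case without the false claim, and your proof can be repaired the same way. Suppose $[u,y_1]\in\LS_S$ and $[u,y_2]\in\MS_S$ are distinct elements of $f^{-1}(u)$, so $y_1\neq y_2$, and recall $[u,y_1]\in\SUS_S(u)$ by Lemma~\ref{lem:propertyofS}. If $y_1<y_2$, then $[u,y_1]$ is a unique proper substring of the MUS $[u,y_2]$, contradicting the definition of a MUS (all proper substrings repeat). If $y_1>y_2$, then the MUS $[u,y_2]$ is a unique substring containing $u$ that is shorter than $[u,y_1]$, contradicting $[u,y_1]\in\SUS_S(u)$. Either way the two cannot coexist, which restores the bound $|f^{-1}(u)|\leq 2$. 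So the structure of your proof is fine, but as written it rests on a lemma that the paper's own figure disproves.
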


\begin{proof}
Assume on the contrary that $|f^{-1}(u)| \geq 3$ for some position $u$ in $S$.
Let $[x_1, y_1]$, $[x_2, y_2]$ be any distinct elements of $f^{-1}(u)$.
We firstly consider the following cases.
\begin{enumerate}
\item[(1)] Case where $[x_1, y_1], [x_2, y_2] \in \LS_S$:
  It follows from the definition of $f^{-1}$ that $f([x_1, y_1]) = f([x_2, y_2]) = u$,
  and it follows from the the definition of $f$ that $x_1 = x_2 = u$.
  Since $[x_1, y_1]$ and $[x_2, y_2]$ are distinct, $y_1 \neq y_2$.
  Assume w.l.o.g. that $y_1 < y_2$.
  Then, $[x_2, y_2] = [u, y_2]$ is a SUS for position $u$
  but it is longer than another SUS $[x_1, y_1] = [u, y_1]$ for position $u$,
  a contradiction.

\item[(2)] Case where $[x_1, y_1], [x_2, y_2] \in \MS_S$:
  It follows from the definition of $f^{-1}$ that $f([x_1, y_1]) = f([x_2, y_2]) = u$,
  and it follows from the definition of $f$ that $x_1 = x_2 = u$.
  Since $[x_1, y_1]$ and $[x_2, y_2]$ are distinct, $y_1 \neq y_2$.
  Assume w.l.o.g. that $y_1 < y_2$.
  Then, $[x_2, y_2] = [u, y_2]$ is a MUS,
  but it contains another MUS $[x_1, y_1] = [u, y_1]$,
  a contradiction.

\item[(3)] Case where $[x_1, y_1], [x_2, y_2] \in \RS_S$:
  This is symmetric to Case (1) and thus we can obtain
  a contradiction in a similar way.
\end{enumerate}

Hence, none of the above three cases is possible,
and thus the remaining possibility is the case
where $|f^{-1}(u)| = 3$ and each element of $f^{-1}(u)$
belongs to a different subset of $\mathcal{PS}_S$,
namely, $f^{-1}(u) = \{[x_1, y_1], [x_2, y_2], [x_3, y_3]\}$
for some $[x_1, y_1] \in \LS_S$,
$[x_2, y_2] \in \mathcal{MS}_S$, and $[x_3, y_3] \in \RS_S$.
It follows from the definition of $f^{-1}$ that $f([x_1, y_1]) = f([x_2, y_2]) = u$,
and it follows from the definition of $f$ that $x_1 = x_2 = u$.
Since $[x_1, y_1]$ and $[x_2, y_2]$ are distinct, $y_1 \neq y_2$.
There are two sub-cases.
\begin{enumerate}
  \item[(i)]
  If $y_1 < y_2$, then a MUS $[x_2, y_2] = [u, y_2]$ contains
  a shorter SUS $[x_1, y_1] = [u, y_1]$ for position $u$, a contradiction.

  \item[(ii)]
  If $y_1 > y_2$, then a SUS $[x_1, y_1] = [u, y_1]$ for position $u$
  contains a shorter MUS $[x_2, y_2] = [u, y_2]$, a contradiction.
\end{enumerate}
Hence, neither of the sub-cases is possible.

Overall, we conclude that $|f^{-1}(u)| \leq 2$.
\end{proof}

By Lemma~\ref{lem:sizeof_finv},
for any position $u$ in string $S$ we have $|f^{-1}(u)| \leq 2$.
Now let us consider any position $u$ for which $|f^{-1}(u)| = 2$.
We have the next lemma.
\begin{lemma}\label{lem:sizeof_finv2}
  For any position $u$ in string $S$ for which $|f^{-1}(u)| = 2$,
  let $f^{-1}(u) = \{[x_1, y_1], [x_2, y_2]\}$ and assume
  w.l.o.g. that $x_1 \leq x_2$.
  Then, $x_1 \neq x_2$, $[x_1, y_1] \in \RS_S$ and
  $[x_2, y_2] \in \LS_S \cup \MS_S$.
\end{lemma}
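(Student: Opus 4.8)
The plan is to prove three things about the two elements of $f^{-1}(u)$: that they cannot share a beginning position (so $x_1 \neq x_2$), that the one with the smaller beginning position belongs to $\RS_S$, and that the other belongs to $\LS_S \cup \MS_S$. The starting point is Lemma~\ref{lem:sizeof_finv}, which already tells us that the two elements of $f^{-1}(u)$ cannot both lie in $\LS_S$, both in $\MS_S$, or both in $\RS_S$; moreover its final argument rules out the combination $[x_1,y_1]\in\LS_S$ together with $[x_2,y_2]\in\MS_S$ when they share a beginning position. So the two elements must occupy two \emph{different} subsets among $\LS_S$, $\MS_S$, $\RS_S$.

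First I would exploit the definition of $f$ to pin down beginning positions. Recall $f$ charges an $\LS_S$ or $\MS_S$ interval to its beginning position and an $\RS_S$ interval to its ending position. I claim exactly one of the two elements lies in $\RS_S$. If neither did, both would lie in $\LS_S\cup\MS_S$, so $f$ would charge both to their beginning positions, forcing $x_1=x_2=u$; but then the two intervals share a beginning position while lying in $\LS_S\cup\MS_S$, and the case analysis of Lemma~\ref{lem:sizeof_finv} (cases (1), (2), and the $\LS_S/\MS_S$ sub-cases (i)--(ii)) has already derived a contradiction from exactly this configuration. Hence at least one element lies in $\RS_S$, and since both cannot (Lemma~\ref{lem:sizeof_finv} case (3)), exactly one does. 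Call it $[x_a,y_a]\in\RS_S$, so $f([x_a,y_a])=y_a=u$, and call the other $[x_b,y_b]\in\LS_S\cup\MS_S$, so $f([x_b,y_b])=x_b=u$.

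Next I would establish the ordering $x_1\neq x_2$ and identify which element is which. By Lemma~\ref{lem:propertyofS}, the $\RS_S$ element satisfies $[x_a,y_a]\in\SUS_S(y_a)=\SUS_S(u)$, and the $\LS_S\cup\MS_S$ element satisfies $x_b=u$, and indeed $[x_b,y_b]\in\SUS_S(u)$ (directly for $\MS_S$ as a MUS containing position $u=x_b$, or by the second clause of Lemma~\ref{lem:propertyofS} for $\LS_S$). So both intervals are SUSs for the same position $u$, and therefore have equal length $y_a-x_a = y_b-x_b$. Now I would argue $x_a < x_b$: since $x_b = u = y_a$ and the length is positive, we have $x_a < y_a = u = x_b$, giving $x_a < x_b$ and in particular $x_1\neq x_2$. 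Relabeling so that $x_1\le x_2$ then forces $[x_1,y_1]=[x_a,y_a]\in\RS_S$ and $[x_2,y_2]=[x_b,y_b]\in\LS_S\cup\MS_S$, which is exactly the claim.

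The main obstacle I anticipate is being careful about the logical bookkeeping rather than any deep combinatorial difficulty: the work lies in cleanly invoking the right sub-cases of Lemma~\ref{lem:sizeof_finv} to exclude two same-subset or same-beginning-position configurations, and in using Lemma~\ref{lem:propertyofS} to convert membership in $\RS_S$ or $\LS_S$ into the statement that both intervals are genuine SUSs for position $u$. Once both are known to be SUSs for $u$ of equal length, the inequality $x_a<x_b$ is immediate from $y_a=u=x_b$, so the ordering conclusion falls out with no further case analysis.
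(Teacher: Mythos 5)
Your proof is correct and follows essentially the same route as the paper's: both arguments combine the definition of $f$ with the case analysis of Lemma~\ref{lem:sizeof_finv} to force the two elements of $f^{-1}(u)$ into $\RS_S$ and $\LS_S \cup \MS_S$ respectively, and then read off $x_1 \neq x_2$ and the ordering from the fact that the $\RS_S$ element ends at $u$ while the other begins at $u$. The only step you leave slightly under-justified is why ``the length is positive,'' i.e.\ why $x_a < y_a$; this is not because the interval is a SUS (a SUS can be a single position) but because an element of $\RS_S$ properly contains a MUS $[x_a, j]$ with $x_a \leq j < y_a$.
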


\begin{proof}
Suppose $x_1 = x_2$ and assume w.l.o.g. that $y_1 < y_2$.
Then, from the definition of $f$, we have that
($x_1 = u$ or $y_1 = u$) and ($x_2 = u$ or $y_2 = u$)
and thus $x_1 = x_2 = u$.
Since $[x_2,y_2]\in f^{-1}(u)$ is not a MUS since it includes $[x_1,y_1]$, it must be that $[x_2,y_2] \in \SUS_S(u)$.
This is a contradiction, because there exists a shorter unique substring $[x_1, y_1]$ that contains $u$.
Thus we have $x_1\neq x_2$.
Assume on the contrary that 
$[x_1, y_1] \in \LS_S \cup \MS_S$.
Then, it follows from the definition of $f$ that $f([x_1, y_1]) = x_1$.
In addition, since $[x_1, y_1] \in f^{-1}(u)$, we have $u  = x_1$.
This implies that $u = x_1 < x_2$,
but it contradicts that $[x_2, y_2] \in f^{-1}(u)$.
Thus, $[x_1, y_1] \not\in \LS_S \cup \MS_S$,
namely, $[x_1, y_1] \in \RS_S$.
Now, it follows from the arguments in the proof of Lemma~\ref{lem:sizeof_finv}
that $[x_2, y_2] \not\in \RS_S$,
and hence $[x_2, y_2] \in \MS_S \cup \LS_S$.
\end{proof}

Let $m = |\mathcal{M}_S|$,
and $\mathcal{M}_S = \{[b_1, e_1], \ldots, [b_m, e_m]\}$.
The next corollary immediately follows from Lemmas~\ref{lem:propertyofS}
and \ref{lem:sizeof_finv2}.
\begin{corollary}\label{col:finv2}
For any position $u$ in string $S$
with $|f^{-1}(u)| = 2$,
there exist two integers $1 \leq i < j \leq m$
such that $\SUS_S(u) = \{[b_i, u], [u, e_j]\}$.
\end{corollary}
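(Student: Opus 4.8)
The plan is to combine the two structural lemmas that precede this corollary, namely Lemma~\ref{lem:propertyofS} and Lemma~\ref{lem:sizeof_finv2}, and simply read off the conclusion. I would start from the hypothesis $|f^{-1}(u)| = 2$ and invoke Lemma~\ref{lem:sizeof_finv2} directly: writing $f^{-1}(u) = \{[x_1, y_1], [x_2, y_2]\}$ with $x_1 \le x_2$, that lemma already tells me $x_1 \ne x_2$, that $[x_1, y_1] \in \RS_S$, and that $[x_2, y_2] \in \LS_S \cup \MS_S$. So the two intervals are of fundamentally different types, and the work that remains is only to rewrite them in the claimed canonical form $[b_i, u]$ and $[u, e_j]$.

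Next I would pin down the endpoints using the definition of $f$ together with the membership facts. Since $[x_1, y_1] \in \RS_S$, the definition of $f$ gives $f([x_1, y_1]) = y_1$, and because $[x_1, y_1] \in f^{-1}(u)$ we get $y_1 = u$; thus $[x_1, y_1] = [x_1, u]$. Symmetrically, since $[x_2, y_2] \in \LS_S \cup \MS_S$, the definition of $f$ gives $f([x_2, y_2]) = x_2$, so $x_2 = u$, and $[x_2, y_2] = [u, y_2]$. Then I would use the definitions of $\RS_S$ and $\LS_S \cup \MS_S$ to extract the MUS witnesses: membership $[x_1, u] \in \RS_S$ means there is a MUS $[x_1, j]$ with $x_1 \le j < u$, so $x_1$ equals the beginning position $b_i$ of some MUS $[b_i, e_i]$, giving $[x_1, u] = [b_i, u]$. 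Likewise $[u, y_2] \in \LS_S \cup \MS_S$ means either $[u, y_2]$ is itself a MUS (the $\MS_S$ case) or there is a MUS $[i', y_2]$ with $u < i' \le y_2$ (the $\LS_S$ case); in both situations $y_2$ coincides with the ending position $e_j$ of some MUS $[b_j, e_j]$, so $[u, y_2] = [u, e_j]$.

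Finally I would settle the index inequality $i < j$. From $x_1 < x_2 = u$ we have $b_i = x_1 < u = b_j' $ where $b_j$ is the beginning of the MUS witnessing $[u, e_j]$; since the $[b_1,e_1],\ldots,[b_m,e_m]$ are indexed by increasing beginning position, $b_i < b_j$ forces $i < j$. I should double-check the borderline where $[u, e_j]$ comes from $\MS_S$ (so $[u, e_j]$ is literally the MUS $[b_j, e_j]$ with $b_j = u$) versus $\LS_S$ (so $e_j$ is the end of a distinct MUS whose start exceeds $u$); in either reading the relevant MUS has beginning position at least $u > b_i$, so the strict inequality $i < j$ survives. Assembling these, $\SUS_S(u) = f^{-1}(u) = \{[b_i, u], [u, e_j]\}$, where the equality $\SUS_S(u) = f^{-1}(u)$ is exactly what Lemma~\ref{lem:propertyofS} underwrites, since each element of $f^{-1}(u)$ lies in $\SUS_S(u)$.

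I expect the main obstacle to be purely bookkeeping rather than conceptual: the delicate point is making the identification of $y_2$ with an ending position $e_j$ airtight across both the $\LS_S$ and $\MS_S$ subcases, and confirming that the witnessing MUS in each subcase has a beginning position strictly greater than $b_i$ so that $i < j$ holds rather than merely $i \le j$. Everything else is a direct substitution of the definitions of $f$, $\RS_S$, and $\LS_S \cup \MS_S$, which is why I would present this as a short corollary proof rather than a standalone argument.
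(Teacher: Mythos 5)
Your overall route is the same as the paper's: the paper offers no written proof here, asserting only that the corollary ``immediately follows'' from Lemmas~\ref{lem:propertyofS} and~\ref{lem:sizeof_finv2}, and your first two paragraphs carry out exactly that intended derivation. The part you do in detail is sound: from Lemma~\ref{lem:sizeof_finv2} and the definition of $f$ you correctly obtain $f^{-1}(u)=\{[x_1,u],[u,y_2]\}$ with $[x_1,u]\in\RS_S$ and $[u,y_2]\in\LS_S\cup\MS_S$, and unwinding the definitions of $\RS_S$, $\LS_S$ and $\MS_S$ does identify $x_1$ with some $b_i$ and $y_2$ with some $e_j$, with $b_i<u\le b_j$ forcing $i<j$.

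The gap is in the last step, where you write $\SUS_S(u)=f^{-1}(u)$ and say this ``is exactly what Lemma~\ref{lem:propertyofS} underwrites.'' It is not. First, Lemma~\ref{lem:propertyofS} speaks only of $\RS_S$ and $\LS_S$; it says nothing about $\MS_S$. When $[u,y_2]\in\MS_S$ you only know it is a point SUS for \emph{some} position $p\in[u,y_2]$, not for $u$ itself, and this can genuinely fail: in the paper's own running example (Figure~\ref{fig:3TypeSUSs}) one has $f^{-1}(5)=\{[3,5],[5,8]\}$ with $[5,8]\in\MS_S$, yet $[3,5]$ has length $3$ and $[5,8]$ has length $4$, so $\SUS_S(5)=\{[3,5]\}$ and $[5,8]\notin\SUS_S(5)$. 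Second, even where Lemma~\ref{lem:propertyofS} does apply it yields only the inclusion $f^{-1}(u)\subseteq\SUS_S(u)$; the reverse inclusion is nowhere argued and is also false in general (in the same example $\SUS_S(6)$ has four elements while $f^{-1}(6)$ has two). So the claimed set equality cannot be obtained by the substitutions you describe. To be fair, this is a defect of the corollary as stated rather than of your bookkeeping --- the statement is strictly stronger than what Lemmas~\ref{lem:propertyofS} and~\ref{lem:sizeof_finv2} deliver --- but a correct write-up must either weaken the conclusion (e.g.\ to ``$f^{-1}(u)=\{[b_i,u],[u,e_j]\}$, and whichever of the two lies in $\RS_S\cup\LS_S$ belongs to $\SUS_S(u)$'') or supply a separate argument for the $\MS_S$ case and for the reverse inclusion. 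You flagged the $\MS_S$ versus $\LS_S$ borderline only for the index inequality $i<j$, not for membership in $\SUS_S(u)$, which is where the real trouble sits.
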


For any position $u$ in string $S$ before $b_1$ or after $b_m$,
we have the next lemma.

\begin{lemma}\label{lem:endSUSnum}
For any position $u$ in string $S$ s.t.
$1 \leq u \leq b_1$ or $b_m < u \leq n$, $|f^{-1}(u)| \leq 1$.
\end{lemma}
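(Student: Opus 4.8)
The plan is to argue by contradiction, leveraging the structural description of size-two fibers that is already in place. By Lemma~\ref{lem:sizeof_finv} we know $|f^{-1}(u)| \leq 2$ for every position $u$, so it suffices to rule out $|f^{-1}(u)| = 2$ when $1 \leq u \leq b_1$ or $b_m < u \leq n$. Assume then that $|f^{-1}(u)| = 2$. By Corollary~\ref{col:finv2} there are indices $1 \leq i < j \leq m$ with $\SUS_S(u) = \{[b_i, u], [u, e_j]\}$, where $[b_i, u] \in \RS_S$ and $[u, e_j] \in \LS_S \cup \MS_S$. (Equivalently, one may read these two elements directly off Lemma~\ref{lem:sizeof_finv2}, which already names the $\RS_S$ member and the $\LS_S \cup \MS_S$ member of any size-two fiber.)

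Next I would read off from the definitions of $\RS_S$ and $\LS_S \cup \MS_S$ which MUS beginnings these two intervals force. Since $[b_i, u] \in \RS_S$, it is a right-extension of the (unique, since MUSs do not nest) MUS $[b_i, e_i]$ beginning at $b_i$, so $e_i < u$ and hence $b_i < u$. Since $[u, e_j] \in \LS_S \cup \MS_S$, its left endpoint $u$ is itself the beginning of a MUS when $[u,e_j] \in \MS_S$ (so $u = b_j$), and otherwise $[u,e_j]$ is a left-extension of the MUS $[b_j, e_j]$ ending at $e_j$, forcing $u < b_j$; in either case $u \leq b_j$. Thus a size-two fiber at $u$ simultaneously demands a MUS beginning strictly to the left of $u$ (namely $b_i$) and a MUS beginning at or to the right of $u$ (namely $b_j$).

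Each extreme range then annihilates exactly one of these demands. If $b_m < u$, then $u \leq b_j \leq b_m < u$ is a contradiction, since $b_m$ is the largest MUS beginning; put differently, no element of $\LS_S \cup \MS_S$ can be charged to a position beyond the last MUS beginning. Symmetrically, if $u \leq b_1$, then $b_i < u \leq b_1 \leq b_i$ is a contradiction, since $b_1$ is the smallest MUS beginning; that is, no element of $\RS_S$ can be charged to a position at or before the first MUS beginning. In both cases the assumed size-two fiber cannot exist, whence $|f^{-1}(u)| \leq 1$.

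The argument is essentially bookkeeping on the definitions, and I do not anticipate a genuine obstacle. The only points demanding care are to handle the two sub-cases of $[u, e_j] \in \LS_S \cup \MS_S$ uniformly, so that the single inequality $u \leq b_j$ holds whether the interval is a MUS or a left-extension of one, and to justify the \emph{strict} inequality $b_i < u$ for the $\RS_S$ element from $e_i < u$ rather than settling for $b_i \leq u$; the strictness is exactly what makes the collision with $b_1$ fatal in the left-extreme case.
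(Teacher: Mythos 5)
Your proof is correct and follows essentially the same route as the paper's: in both, a size-two fiber at $u$ is shown (via Lemma~\ref{lem:sizeof_finv2}) to contain an $\RS_S$ element forcing a MUS to begin strictly before $u$ and an $\LS_S \cup \MS_S$ element forcing a MUS to begin at or after $u$, and each extreme range of $u$ contradicts one of these via the extremality of $b_1$ or $b_m$. Your packaging through Corollary~\ref{col:finv2} and the uniform inequality $u \leq b_j$ is a minor stylistic difference, not a different argument.
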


\begin{proof}
Assume on the contrary that 
$|f^{-1}(u)| = 2$ for some $1 \leq u \leq b_1$.
By Lemma~\ref{lem:sizeof_finv2},
there exists $[x, y] \in f^{-1}(u)$ such that $[x, y] \in \RS_S$.
By the definitions of $f$ and $f^{-1}$, we have $y = u$.
Also, by the definition of $\RS_S$,
there exists a position $e < y$ in $S$ such that $[x,e] \in \mathcal{M}_S$.
Now we have $x \leq e < y = u \leq b_1$,
however, this contradicts that $b_1$ is the beginning position of 
the first (leftmost) MUS in $\mathcal{M}_S$.
Thus $|f^{-1}(u)| \leq 1$ for any $1 \leq u \leq b_1$.

Assume on the contrary that 
$|f^{-1}(u)| = 2$ for some $b_m < u \leq n$.
By Lemma~\ref{lem:sizeof_finv2},
there exists $[x', y'] \in f^{-1}(u)$ such that 
$[x', y'] \in \MS_S \cup \LS_S$.
By the definition of $f$ and $f^{-1}$,
we have $x' = u$.
There are two cases to consider:
\begin{itemize}
 \item If $[x', y'] \in \MS_S$,
       then $[x', y'] \in \mathcal{M}_S$.
       Thus $x' = u > b_m$ is the beginning position of a MUS in $\mathcal{M}_S$,
       however, this contradicts that $b_m$ is the beginning position of the 
       last (rightmost) MUS in $\mathcal{M}_S$.

 \item If $[x', y'] \in \LS_S$,
       then by the definition of $\LS_S$
       there exists a position $b > x'$ such that $[b,y'] \in \mathcal{M}_S$.
       Now we have $b > x' = u > b_m$,
       however, this contradicts that $b_m$ is the beginning position of the 
       last (rightmost) MUS in $\mathcal{M}_S$.
\end{itemize}
Consequently, $|f^{-1}(u)| \leq 1$ for any $b_m < u \leq n$.
\end{proof}

\begin{lemma}\label{lem:sizeof_U}
  For any non-empty string $S$,
  let $U = \{ u:|f^{-1}(u)| = 2\}$.
  Then, $|U| \leq |\mathcal{M}_S| - 1$.
\end{lemma}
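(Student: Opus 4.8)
The plan is to build an injection from $U$ into the MUS-index set $\{2,3,\ldots,m\}$, which has size $m-1$. For each $u \in U$, Lemma~\ref{lem:sizeof_finv2} together with Corollary~\ref{col:finv2} gives $f^{-1}(u) = \{[b_{i(u)}, u],\, [u, e_{j(u)}]\}$ for some pair of MUS indices $1 \le i(u) < j(u) \le m$, where $[b_{i(u)}, u] \in \RS_S$ is the right-extension element (it ends at $u$ and begins at a MUS start) and $[u, e_{j(u)}] \in \LS_S \cup \MS_S$ is the left-extension-or-MUS element (it begins at $u$ and ends at the right end $e_{j(u)}$ of MUS $j(u)$). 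I would define $g : U \to \{2,\ldots,m\}$ by $g(u) = j(u)$; this lands in $\{2,\ldots,m\}$ since $j(u) > i(u) \ge 1$. The lemma then follows once $g$ is shown to be injective.

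To prove injectivity I argue by contradiction: suppose $u < u'$ are two positions of $U$ with $g(u) = g(u') = j$, so the two left-elements $[u,e_j]$ and $[u',e_j]$ share the right endpoint $e_j$. The first observation is that $[u,e_j]$ cannot be a MUS, because its proper suffix $[u',e_j]$ is the left-element of $f^{-1}(u')$, hence a unique substring, so $[u,e_j]$ has a unique proper substring. Therefore $[u,e_j] \in \LS_S$, and by Lemma~\ref{lem:propertyofS} we get $[u,e_j] \in \SUS_S(u)$. Consequently \emph{every} unique substring of $S$ containing position $u$ has length at least $e_j - u + 1$; the goal is to contradict this by producing a strictly shorter unique substring covering $u$.

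The shorter substring is harvested from the data attached to $u'$. Let $[b_{i(u')}, u'] \in \RS_S$ be the right-element of $f^{-1}(u')$; by Lemma~\ref{lem:propertyofS} it lies in $\SUS_S(u')$, so its length equals the SUS length at $u'$, which is at most $e_j - u' + 1$ because $[u',e_j]$ is itself a unique substring containing $u'$. Moreover, being in $\RS_S$, it is the right-extension of the MUS $[b_{i(u')}, e_{i(u')}]$, whose right end satisfies $e_{i(u')} < u'$. I then split on the location of $b_{i(u')}$. If $b_{i(u')} \le u$, then $[b_{i(u')}, u']$ already contains $u$ and has length at most $e_j - u' + 1 < e_j - u + 1$. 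If instead $b_{i(u')} > u$, then $[u, e_{i(u')}]$ contains the MUS $[b_{i(u')}, e_{i(u')}]$ (so it is unique), contains $u$, and has length $e_{i(u')} - u + 1 < e_j - u + 1$, since $e_{i(u')} < u' \le e_j$. In either case I obtain a unique substring containing $u$ that is shorter than $e_j - u + 1$, contradicting $[u,e_j] \in \SUS_S(u)$. Hence $g$ is injective and $|U| \le m - 1$.

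The main obstacle is exactly the injectivity of $g$, and within it the case $b_{i(u')} > u$: here the right-element of $u'$ does not reach back to $u$, so one cannot use it directly and must instead invoke the underlying MUS $[b_{i(u')}, e_{i(u')}]$ (which sits strictly between $u$ and $u'$) and extend it leftward to $u$ to manufacture the short unique substring. Checking that this MUS lies to the right of $u$ yet ends before $e_j$ is the delicate point; everything else is routine bookkeeping with the definitions of $\RS_S$, $\LS_S$, $\MS_S$ and Lemma~\ref{lem:propertyofS}. I would verify this plan against the running example $S = \mathtt{aabbaababaa}$, where $U = \{5,6\}$ with $g(5)=3$ and $g(6)=4$, confirming both injectivity and the bound $|U| = 2 \le m-1 = 4$.
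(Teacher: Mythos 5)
Your proof is correct, and it follows the same overall strategy as the paper's — an injection from $U$ into an $(m-1)$-element set of MUS-derived objects, with injectivity established by manufacturing a unique substring covering the smaller position that is shorter than its SUS — but the injection is parameterized differently and the case analysis diverges. The paper sends $u$ to the predecessor of $u$ among the beginning positions $\{b_1,\ldots,b_{m-1}\}$, whereas you send $u$ to the index $j(u)$ of the MUS supplying the right end of the left element of $f^{-1}(u)$; these two maps coincide up to relabelling, since the left element of $f^{-1}(u)$ must end at $e_{i+1}$ where $b_i$ is that predecessor. The practical difference shows up in the injectivity argument: with the predecessor parameterization the paper gets $b_{i(u')}\le b_i<u$ for free, so only your Case 1 (the right element $[b_{i(u')},u']$ already covers $u$ and is too short) can occur; with your parameterization you must also dispose of the possibility $b_{i(u')}>u$, which you handle correctly by extending the MUS $[b_{i(u')},e_{i(u')}]$ leftward to $u$ — that case is in fact vacuous, but treating it costs nothing and keeps your proof self-contained. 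One cosmetic remark: Corollary~\ref{col:finv2} as printed asserts $\SUS_S(u)=\{[b_i,u],[u,e_j]\}$, while what you (and the paper's own proof) actually use is that $f^{-1}(u)=\{[b_i,u],[u,e_j]\}$ with both intervals lying in $\SUS_S(u)$; your reading is the intended one.
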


\begin{proof}
Let $n = |S|$ and $m = |\mathcal{M}_S|$.
Recall that for any $1 \leq i \leq m$,
$[b_i, e_i]$ denotes the $i$th element of $\mathcal{M}_S$.

Let $B = \{ b_i~:~1 \leq i \leq m-1\}$.
We define function $g : U \rightarrow B$ as
$g(u) = \max\{ b < u~:~b \in B \}$.
By the definition of $U$ and Lemma~\ref{lem:endSUSnum},
any position $u \in U$ satisfies $b_1 < u \leq b_m$.
Therefore, $g(u)$ is well-defined for any position $u \in U$,
and $g(u)$ returns the predecessor of $u$ in the set $B$.
It is clear that $|B| = m-1$.
Thus, if $g$ is an injection,
then we immediately obtain the claimed bound $|U| \leq |B| = m-1$.

In what follows, we show that $g$ is indeed an injection.
Assume on the contrary that $g$ is not an injection.
Let $u_1$ and $u_2$ be elements in $U$ such that $u_1 < u_2$
and $g(u_1) = g(u_2)$.
Let $b_i \in B$ such that $b_i = g(u_1) = g(u_2)$.
Then, by the definition of $g$, we have $b_i < u_1 < u_2 \leq b_{i+1}$.
See Figure \ref{fig:injective} for illustration.

Let $l_1$ and $l_2$ be the lengths of the SUSs for positions $u_1$ and $u_2$,
respectively.
Since $|f^{-1}(u_2)| = 2$,
it follows from Corollary~\ref{col:finv2} that
there exists $b_k \in B$ such that $b_k \leq b_i$ and
$\SUS_S(u_2) = \{[b_k, u_2], [u_2, e_{i+1}]\}$.
This implies $l_2 = u_2-b_k+1 = e_{i+1}-u_2+1$.
On the other hand, since $|f^{-1}(u_1)| = 2$,
it follows from Corollary~\ref{col:finv2} that
$[u_1, e_{i+1}] \in \SUS_S(u_1)$,
which implies $l_1 = e_{i+1}-u_1+1$.
Since $u_1 < u_2$, we have $l_1 > l_2$.

Now focus on a SUS $[b_k, u_2]$ for position $u_2$.
Since $b_k \leq b_i < u_1 < u_2$, $[b_k, u_2]$ contains $u_1$.
However, $[b_k, u_2]$ is a SUS for position $u_2$
and is of length $l_2 < l_1$.
This contradicts that $[u_1, e_{i+1}]$ of length $l_1$ is each SUS
for position $u_1$.
Hence $g$ is an injection.
\end{proof}

\begin{figure}[tb] 
 \centerline{
  \includegraphics[scale=0.3]{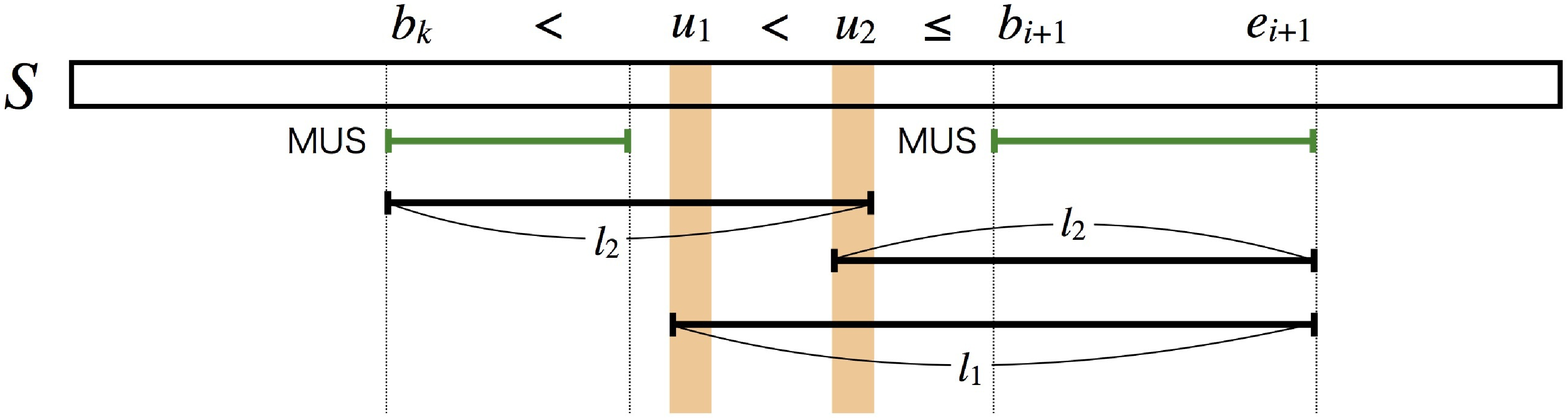}
 }
 \caption{
   Illustration for Lemma~\ref{lem:sizeof_U}.
   The two intervals show two MUSs $[b_k, e_k]$, $[b_{i+1}, e_{i+1}]$
   $\in \mathcal{M}_S$,
   where $b_k \leq b_i$.
   Both $[b_k, u_2]$ and $[u_2, b_{i+1}]$ are SUSs for position $u_2$,
   and $[u_1, e_{i+1}]$ is a SUS for position $u_1$.
   Since $u_1 < u_2$,
   it holds that $l_1 > l_2$,
   where $l_1$ and $l_2$ are the lengths of SUSs for positions $u_1$ and $u_2$,
   respectively.
   Then, the interval $[b_k, u_2]$ of length $l_2$ contains position $u_1$
   and $S[b_k..u_2]$ is a unique substring of $S$.
   However, this contradicts that $l_1$ is the length of each SUS for position $u_1$.
 }
\label{fig:injective}
\end{figure}

We are ready to prove the main result of this subsection, Theorem~\ref{thm:ubound2}.
\begin{proof}
Let $n = |S|$, $m = |\mathcal{M}_S|$,
$U = \{u : |f^{-1} (u)| = 2\}$, and 
$V = \{1, \cdots , n\} \setminus U$.
It is clear that $|U| + |V| = n$.
By Lemma~\ref{lem:sizeof_finv}, $V = \{u \ : \ |f^{-1} (u)| \leq 1\}$.
Also, by Lemma~\ref{lem:sizeof_U}, $|U| \leq m-1$.
Recall that $|\mathcal{PS}_S| = \sum^{n}_{u=1}|f^{-1}(u)|$.
Putting all together, we obtain
$|\mathcal{PS}_S| = \sum^{n}_{u=1}|f^{-1}(u)| 
\leq |V| + 2|U| = n + |U| \leq  n + m - 1$. 
\end{proof}

\subsection{Matching upper and lower bounds} \label{subsec:tight}

We are ready to show the main result of this paper.

\begin{theorem} \label{thm:ubound_is_tight}
  For any non-empty string $S$,
  $|\mathcal{PS}_S| \leq (3|S|-1)/2$.
  This bound is tight, namely, 
  for any odd $n \geq 5$
  there exists a string $T$ of length $n$ s.t.
  $|\mathcal{PS}_T| = (3n-1)/2$.
\end{theorem}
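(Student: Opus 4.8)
The plan is to derive the upper bound by simply combining the two inequalities already proved, and then to establish tightness by exhibiting an explicit family of strings. For the upper bound, Theorem~\ref{thm:ubound1} gives $|\mathcal{PS}_S| \le 2|S| - |\mathcal{M}_S|$ and Theorem~\ref{thm:ubound2} gives $|\mathcal{PS}_S| \le |S| + |\mathcal{M}_S| - 1$. Adding these two inequalities, the term $|\mathcal{M}_S|$ cancels and I obtain $2|\mathcal{PS}_S| \le 3|S| - 1$, i.e. $|\mathcal{PS}_S| \le (3|S|-1)/2$, which holds for every $S$ irrespective of the value $m = |\mathcal{M}_S|$. Note that the combined bound is maximized exactly when the two individual bounds coincide, namely when $m = (|S|+1)/2$; this is the regime the lower-bound construction must hit.

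For tightness I would construct, for each odd $n \ge 5$, a string $T$ of length $n$ over the alphabet $\{b, c_1, \ldots, c_{(n+1)/2}\}$ by placing pairwise distinct characters $c_1, \ldots, c_{(n+1)/2}$ at the odd positions $1, 3, \ldots, n$ and the single repeated character $b$ at every even position $2, 4, \ldots, n-1$; that is, $T = c_1\, b\, c_2\, b\, c_3 \cdots b\, c_{(n+1)/2}$. Because $n \ge 5$ there are at least two even positions, so $b$ occurs at least twice, while each $c_k$ occurs exactly once. I would then count $\mathcal{PS}_T$ position by position. At each odd position $2k-1$ the character $c_k$ is already unique, so the only SUS is the length-one interval $[2k-1, 2k-1]$. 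At each even position $2k$ the character $b$ repeats, but the two length-two substrings $T[2k-1..2k] = c_k b$ and $T[2k..2k+1] = b c_{k+1}$ are both unique (since $c_k$ and $c_{k+1}$ each occur once), whence $\SUS_T(2k) = \{[2k-1,2k],\,[2k,2k+1]\}$.

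The crucial verification step is that these intervals are pairwise distinct, so that no element of $\mathcal{PS}_T$ is counted twice. Each length-one interval is a SUS only for its own position; each length-two interval $[j,j+1]$ is a SUS for exactly one position (its unique even endpoint); and no length-two interval can be a SUS for the adjacent odd position, because that position already owns a length-one SUS. Summing then gives $(n+1)/2$ distinct intervals of length one together with the $n-1$ distinct intervals $[1,2], [2,3], \ldots, [n-1,n]$ of length two, for a total of $(n+1)/2 + (n-1) = (3n-1)/2$ elements of $\mathcal{PS}_T$, matching the upper bound. As a sanity check one also sees that every MUS of $T$ is one of the $(n+1)/2$ unique single characters (any substring of length at least two straddles an odd position and hence contains a unique proper substring, so it cannot be minimal), which confirms $|\mathcal{M}_T| = (n+1)/2$ and shows both Theorem~\ref{thm:ubound1} and Theorem~\ref{thm:ubound2} are simultaneously tight.

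I expect the main obstacle to be the lower bound rather than the upper bound: the difficulty is designing a single construction that makes both earlier bounds tight at once, which forces the construction to pin $|\mathcal{M}_T|$ to exactly $(n+1)/2$ while also guaranteeing that exactly $(n-1)/2$ positions carry two SUSs and the rest carry one. The alternating string above is engineered precisely for this, so once the per-position SUS sets are identified the only remaining work is the routine distinctness (no double-counting) argument just described.
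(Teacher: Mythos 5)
Your proposal is correct and follows essentially the same route as the paper: the upper bound is obtained by combining Theorems~\ref{thm:ubound1} and~\ref{thm:ubound2} so that $|\mathcal{M}_S|$ cancels, and the tightness witness is the same alternating string (the paper's $T = a_1 x a_2 x \cdots a_{k-1} x a_k$ is your $c_1 b c_2 b \cdots b c_{(n+1)/2}$ up to renaming), counted position by position in the same way. Your extra remarks on distinctness of the counted intervals and on $|\mathcal{M}_T| = (n+1)/2$ are sound but not a different argument.
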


\begin{proof}
  By Theorem~\ref{thm:ubound1},
  we have $|\mathcal{M}_S| \leq 2|S| - |\mathcal{PS}_S|$.
  Also, by Theorem~\ref{thm:ubound2},
  we have $|\mathcal{PS}_S|-|S|+1 \leq |\mathcal{M}_S|$.
  Thus $|\mathcal{PS}_S|-|S|+1 \leq 2|S| - |\mathcal{PS}_S|$,
  which immediately leads to the claimed bound $|\mathcal{PS}_S| \leq (3|S|-1)/2$.
  
  We show that the above upperbound is indeed tight.
  For any odd number $n = 2k-1 \geq 5$,
  consider string
  $T = a_1 x a_2 x \cdots a_{k-1} x a_k$,
  where $a_1, \ldots, a_k, x \in \Sigma$,
  $a_i \neq a_j$ for all $1 \leq i \neq j \leq k$,
  and $x \neq a_i$ for all $1 \leq i \leq k$.
  For any $1 \leq i \leq k$,
  $T[2i-1] = a_i$ is a unique substring of $T$,
  and thus $[2i-1, 2i-1] \in \SUS_T(2i-1)$.
  Also, for any $1 \leq i \leq k-1$,
  $T[2i] = x$ is a repeating substring of $T$
  while $T[2i-1..2i] = a_ix$ and $T[2i..2i+1] = xa_{i+1}$
  are unique substrings of $T$.
  This implies that $[2i-1, 2i], [2i, 2i+1] \in \SUS_T(2i)$.
  Hence, we have
  $|\mathcal{PS}_T| = k + 2 (k - 1) = 3k - 2 = 3(n+1)/2 - 2 = (3n - 1) / 2$.
\end{proof}

\subsection{Lower bound for fixed-size alphabet}
\label{subsec:binary}

The lowerbound of Theorem~\ref{thm:ubound_is_tight}
is due to a series of strings over an alphabet of unbounded size.
In this subsection, we fix the alphabet size $\sigma$
and present a series of strings that contain many point SUSs.

\begin{theorem} \label{theo:constant}
Let $n \geq 2$ and $2 \leq \sigma \leq (n+3)/2$.
There exists a string $T$ of length $n$
over an alphabet of size $\sigma$ such that $|\mathcal{PS}_T| = n+\sigma-2$.
\end{theorem}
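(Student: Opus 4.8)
The plan is to exhibit one explicit string and count its point SUSs directly. Building on the tight example behind Theorem~\ref{thm:ubound_is_tight}, I would take $\sigma-1$ pairwise distinct symbols $a_1,\dots,a_{\sigma-1}$ together with one further symbol $x$, and set
\[
T = a_1\, x\, a_2\, x\, \cdots\, a_{\sigma-1}\, x^{\ell}, \qquad \ell = n-2\sigma+3 .
\]
The prefix $a_1 x \cdots a_{\sigma-1}$ has length $2\sigma-3$, so appending $\ell$ copies of $x$ makes $|T|=n$; the hypothesis $\sigma \le (n+3)/2$ guarantees $\ell \ge 0$, and $T$ uses exactly the $\sigma$ symbols $a_1,\dots,a_{\sigma-1},x$. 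When $\ell=0$ this is precisely the string of Theorem~\ref{thm:ubound_is_tight} with $k=\sigma-1$, so the new construction merely pads that example with extra separators to shrink the alphabet while keeping the length fixed.

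Next I would pin down the unique substrings of $T$. Since each $a_i$ occurs exactly once, every substring that covers some position $2i-1$ is unique; conversely a substring avoiding all the $a_i$ consists only of $x$'s, and such a block is unique if and only if it is the whole trailing run $x^{\ell}$ (every shorter all-$x$ block repeats, provided $x$ itself occurs at least twice). The number of occurrences of $x$ is $n-\sigma+1$, so $x$ repeats throughout the stated range except at the very smallest instances, which I would dispose of by direct inspection; in the main argument I may therefore assume that $x$ is a repeating substring and that $\ell \ge 1$.

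With this in hand I would compute $\SUS_T(p)$ position by position and tally the distinct intervals in three families. At each odd position $2i-1$ the single symbol $a_i$ is already unique, giving the MUS $[2i-1,2i-1]$ and contributing $\sigma-1$ singleton intervals. At each interior separator $2i$ $(1\le i\le \sigma-2)$ the two length-$2$ substrings $a_i x=[2i-1,2i]$ and $x a_{i+1}=[2i,2i+1]$ are both unique and both shortest (no length-$1$ unique substring covers $2i$ since $x$ repeats), contributing $2(\sigma-2)$ further intervals. Finally, for a position $p$ inside the trailing run the only candidates are the left extension $[2\sigma-3,p]=a_{\sigma-1}x^{\,p-2\sigma+3}$ of length $p-2\sigma+4$ and the full run $[2\sigma-2,n]=x^{\ell}$ of length $\ell$; a short length comparison shows the left extension wins for $p\le n-2$, the two tie at $p=n-1$, and the full run wins at $p=n$, so $p=n-1$ is the unique position carrying two shortest unique substrings and the trailing run contributes exactly the $\ell$ distinct intervals $\{[2\sigma-3,p]:2\sigma-2\le p\le n-1\}\cup\{[2\sigma-2,n]\}$. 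Summing, and checking that the three families are pairwise disjoint (they differ in length or in left endpoint, the trailing left endpoints all exceeding the interior ones), gives
\[
|\mathcal{PS}_T| = (\sigma-1) + 2(\sigma-2) + \ell = 3\sigma-5+\ell = n+\sigma-2 .
\]

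The routine parts are the length/alphabet bookkeeping and the disjointness of the three families. The step that needs genuine care is the trailing-run analysis: I must verify not merely the lengths of the two candidate unique substrings but also that \emph{no} shorter unique substring containing $p$ exists and that exactly one position ($p=n-1$) yields two of them. Equally delicate is the boundary behaviour when $\ell\in\{0,1\}$, where the full run $x^{\ell}$ is either absent or fails to be unique: here the count of trailing intervals still equals $\ell$, but the specific intervals differ, so these cases, together with the smallest instances in which $x$ occurs only once, must be confirmed separately. This boundary bookkeeping is where the argument is most error-prone.
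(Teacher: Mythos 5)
Your proposal is essentially identical to the paper's proof: the same string $T = a_1 x a_2 x \cdots a_{\sigma-1} x^{n-2\sigma+3}$, the same three families of intervals (the $\sigma-1$ singletons at the $a_i$, the $2(\sigma-2)$ length-two SUSs at the interior separators, and the $n-2\sigma+3$ intervals arising from the trailing run), and the same arithmetic; you are in fact somewhat more explicit than the paper about why no other SUSs arise and about which position of the run carries two SUSs. The degenerate boundary you flag is real but is shared with (and silently ignored by) the paper's own proof: e.g.\ for $n=3$, $\sigma=3$ the letter $x$ occurs only once, $T=a_1xa_2$ has only $3$ point SUSs rather than $n+\sigma-2=4$, so the claimed equality genuinely fails at the extreme $\sigma=(n+3)/2$ end of the stated range.
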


\begin{proof}
Let $\Sigma = \{a_1, \cdots, a_{\sigma-1}, x\}$ 
and $T = a_1 x a_2 x \cdots a_{\sigma-1} x^{n-2\sigma+3}$.
  For any $1 \leq i \leq \sigma-1$, 
  $T[2i-1] = a_i$ is a unique substring of $T$, 
  and thus $[2i-1,2i-1] \in \SUS_T(2i-1)$.
  For any $1 \leq j \leq \sigma-2$, 
  $T[2j] = x$ is a repeating substring of $T$
  while $T[2j-1..2j] = a_j x$ and $T[2j..2j + 1] = x a_{j+1}$
  are unique substrings of $T$. 
  This implies that $[2j-1,2j], [2j,2j+1] \in \SUS_T(2j)$.
  For any $2\sigma-2 \leq k \leq n-1$, $T[2\sigma-2..k] = x^{k-2\sigma+3}$
  is a repeating substring of $T$
  while $T[2\sigma-1..k] = a_{\sigma-1} x^{k-2\sigma+3}$ is
  a unique substrings of $T$.
  This implies that $[2\sigma-1,k] \in \SUS_T(k)$.
  Also, $T[2\sigma-1..n] = x^{n-2\sigma+2}$
  is a repeating substring of $T$
  and $T[2\sigma-2..n] = x^{n-2\sigma+3}$ is a unique substring of $T$,
  and thus $[2\sigma-2..n] \in \SUS_T(n)$.
Summing up all the point SUSs above,
we obtain $|\mathcal{PS}_T| = \sigma-1 + 2(\sigma-2) + n-2\sigma+2 + 1 
= n+\sigma-2$.
\end{proof}

\section{Bounds on the number of interval SUSs} \label{sec:interval}
In this section, 
we show the tight bound 
for the maximum number of 
non-trivial interval SUSs $\mathcal{IS}_S$ of a string $S$.
The following upper bound for $|\mathcal{IS}_S|$ can be obtained 
in an analogous way to Theorem~\ref{thm:ubound1}.
\begin{lemma} \label{lem:intervalubound}
For any non-empty string $S$, 
$|\mathcal{IS}_S| \leq 2|S|-|\mathcal{M}_S|$.
\end{lemma}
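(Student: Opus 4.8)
The plan is to follow the proof of Theorem~\ref{thm:ubound1} almost verbatim, using the non-trivial-interval analogues of the sets $\LS_S$, $\MS_S$, $\RS_S$. Concretely I would set $\MS'_S=\mathcal{IS}_S\cap\mathcal{M}_S$, let $\LS'_S$ be the set of non-MUS members $[x,y]$ of $\mathcal{IS}_S$ that contain a MUS ending at $y$ (some $[i,y]\in\mathcal{M}_S$ with $x<i\le y$), and let $\RS'_S$ be the set of non-MUS members $[x,y]$ of $\mathcal{IS}_S$ that contain a MUS beginning at $x$ (some $[x,j]\in\mathcal{M}_S$ with $x\le j<y$). Writing $n=|S|$ and $m=|\mathcal{M}_S|$, it then suffices to show $\mathcal{IS}_S\subseteq\LS'_S\cup\MS'_S\cup\RS'_S$ together with the three cardinality bounds $|\MS'_S|\le m$, $|\LS'_S|\le n-m$ and $|\RS'_S|\le n-m$, and to add them. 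Since I only want an upper bound, I do not need the three sets to be pairwise disjoint; coverage alone gives $|\mathcal{IS}_S|\le|\LS'_S|+|\MS'_S|+|\RS'_S|\le 2n-m$.

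For the covering step, take a non-MUS member $[x,y]$ of $\mathcal{IS}_S$ and a witnessing query $[s,t]$. First I would rule out that $[x,y]$ is strictly extended on both sides: if $x<s$ and $y>t$, then minimality of the SUS makes both $[x+1,y]$ and $[x,y-1]$ repeating, and since every proper substring of $[x,y]$ lies inside one of these two, $[x,y]$ would be a MUS, a contradiction. Hence $x=s$ or $y=t$. In the branch $y=t$ we must have $x<s$ (otherwise the query equals the unique SUS $[x,y]$, which being of length at least two is a trivial SUS); then $[x+1,y]$ is repeating, and shrinking $[x,y]$ from the right yields a MUS beginning at $x$ that is strictly shorter than $[x,y]$, so $[x,y]\in\RS'_S$. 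The branch $x=s$ is symmetric and gives $[x,y]\in\LS'_S$, establishing the covering.

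The bound $|\MS'_S|\le m$ is trivial, and $|\LS'_S|\le n-m$ will follow from $|\RS'_S|\le n-m$ by the left-right reversal of $S$, exactly as in Theorem~\ref{thm:ubound1}; so the whole argument reduces to bounding $|\RS'_S|$. List $\mathcal{M}_S=\{[b_1,e_1],\dots,[b_m,e_m]\}$ with $b_1<\cdots<b_m$, so that $e_1<\cdots<e_m$ by Lemma~\ref{lem:sizeof_mus}. Every element of $\RS'_S$ is of the form $[b_i,y]$ with $e_i<y$, and I would count, for each $i$, the admissible endpoints $y$. The key step --- the only place where the interval case really differs from the point case --- is to prove that a non-trivial SUS $[b_i,y]$ with $i\le m-1$ forces $y<e_{i+1}$; in Theorem~\ref{thm:ubound1} this came for free from Lemma~\ref{lem:propertyofS}, but here the witnessing query is an interval, so I would argue directly. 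If $y\ge e_{i+1}$, then $[b_i,y]$ contains both the MUS $[b_i,e_i]$ (with $e_i<y$) and the MUS $[b_{i+1},e_{i+1}]$ (with $b_i<b_{i+1}$ and $e_{i+1}\le y$); hence $[b_i,y-1]$ and $[b_i+1,y]$ are both unique, and minimality of the SUS forces the query to satisfy $t=y$ and $s=b_i$, i.e.\ the query is $[b_i,y]$ itself. But $[b_i,y]$ is a unique interval of length at least two, so this query is a trivial SUS, contradicting $[b_i,y]\in\mathcal{IS}_S$. Therefore $e_i<y<e_{i+1}$ for $i\le m-1$ and $e_m<y\le n$ for $i=m$, and the telescoping sum $\sum_{i=1}^{m-1}(e_{i+1}-e_i-1)+(n-e_m)=n-e_1-(m-1)\le n-m$ gives $|\RS'_S|\le n-m$. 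I expect this exclusion of $y\ge e_{i+1}$ --- in particular showing that minimality pins the query to the whole interval, which is exactly where the non-triviality hypothesis is used --- to be the main obstacle.
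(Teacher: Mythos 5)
Your proof is correct and is precisely the ``analogous'' adaptation of Theorem~\ref{thm:ubound1} that the paper invokes for Lemma~\ref{lem:intervalubound} without writing out any details. You also correctly isolate and resolve the one step that does not carry over verbatim from the point case: when $y\ge e_{i+1}$, both $[b_i,y-1]$ and $[b_i+1,y]$ are unique, so minimality pins any witnessing query to $[b_i,y]$ itself, which is excluded exactly by the non-triviality hypothesis.
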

We also have the following lower bound for $|\mathcal{IS}_S|$.
\begin{lemma} \label{lem:intervallbound}
For any $\varepsilon > 0$, 
there exists a string $T$ of length $n$ 
such that 
$|\mathcal{IS}_T| > (2-\varepsilon)n$.
\end{lemma}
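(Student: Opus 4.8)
The plan is to exhibit, for each $\varepsilon$, an explicit string of suitable length whose number of non-trivial interval SUSs comes within an additive constant of the upper bound $2|S|-|\mathcal{M}_S|$ from Lemma~\ref{lem:intervalubound}. Concretely, I fix three pairwise-distinct characters $a,y,b$ and take $T = a\, y^{n-2}\, b$ of length $n$. Its only MUSs are $[1,1]=a$, $[n,n]=b$, and the entire run $[2,n-1]=y^{n-2}$ (every shorter power $y^{k}$ with $k\le n-3$ repeats, whereas $y^{n-2}$ occurs once), so $|\mathcal{M}_T|=3$ and the upper bound reads $|\mathcal{IS}_T|\le 2n-3$. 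It therefore suffices to exhibit $2n-3$ pairwise-distinct non-trivial SUSs; choosing $n>3/\varepsilon$ then gives $|\mathcal{IS}_T|\ge 2n-3>(2-\varepsilon)n$, as required.

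First I would show that every prefix interval $[1,j]$ with $1\le j\le n-2$ lies in $\mathcal{IS}_T$. For $j=1$ this is the point SUS $\SUS_T(1)=\{[1,1]\}$. For $2\le j\le n-2$, the query $[2,j]$ is non-trivial (it is a single position when $j=2$, and otherwise the repeating interval $y^{j-1}$, hence never a unique interval with distinct endpoints), and I claim $[1,j]=a\,y^{j-1}\in\SUS_T([2,j])$: it is unique and contains $[2,j]$, while any unique substring containing $[2,j]$ either starts at position $1$, and so has length $\ge j$, or lies strictly inside the run and thus equals some $y^{k}$, which repeats unless it is the whole run $y^{n-2}$ of length $n-2\ge j$. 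Hence no strictly shorter unique substring contains $[2,j]$, so $[1,j]$ is a shortest one. By the left–right symmetry of $T$, the suffix intervals $[i,n]$ with $3\le i\le n$ are in $\mathcal{IS}_T$ for the symmetric reason (via the point query at $n$ for $i=n$ and the non-trivial query $[i,n-1]$ for $3\le i\le n-1$). Finally, the middle MUS $[2,n-1]=y^{n-2}$ is a SUS for the central non-trivial query $[3,n-2]$ when $n\ge 6$ (and is already a point SUS of the central position when $n=5$), so it too belongs to $\mathcal{IS}_T$.

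Counting is then immediate: the $n-2$ prefix intervals $[1,j]$, the $n-2$ suffix intervals $[i,n]$, and the single interval $[2,n-1]$ are pairwise distinct, since a prefix interval has left end $1$ and right end at most $n-2$, a suffix interval has right end $n$ and left end at least $3$, and $[2,n-1]$ matches neither pattern. This yields $2(n-2)+1=2n-3$ distinct members of $\mathcal{IS}_T$, which already suffices for the lower bound (and, together with Lemma~\ref{lem:intervalubound}, shows $T$ attains the upper bound exactly).

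The main obstacle I anticipate is not the counting but the SUS-membership verifications: for each claimed interval I must rule out any \emph{strictly shorter} unique substring containing its query. The saving grace is the extreme simplicity of $T$—away from positions $1$ and $n$ every substring is a power of $y$, so uniqueness is governed entirely by whether a candidate touches an endpoint character or spans the whole run—which collapses these checks to a single routine case distinction. A secondary point to handle with care is the definition of non-triviality: each interval above is realized as a SUS for a query that is either a single position or a repeating interval, never merely as the self-answer of a unique query, so none of them is discarded as a trivial SUS.
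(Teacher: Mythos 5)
Your proof is correct and follows essentially the same strategy as the paper's: an explicit string consisting of runs of one repeated character delimited by unique sentinel characters, whose non-trivial SUSs are enumerated directly by pairing each with a repeating (or single-position) query; both your $a\,y^{n-2}\,b$ and the paper's $c_1 a^x c_2 a^x c_3$ attain the extremal value $2|T|-|\mathcal{M}_T|=2n-3$. The only nit is that your case analysis for why no shorter unique substring contains $[2,j]$ omits candidates starting at position $2$ and ending at position $n$, but these have length $n-1>j$, so nothing breaks.
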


\begin{proof}
Let $x = \lceil 3/(2\varepsilon) \rceil$, 
$T = c_1 a^x c_2 a^x c_3$ and $n = |T| = 2x+3$.
Clearly, $c_1, c_2$ and $c_3$ are MUSs of $T$ 
and are in $\mathcal{IS}_T$. 
For all $2\leq i \leq x+1$, 
$T[1..i]$ and $T[i..x+2]$ are unique substrings of $T$, 
and $T[2..i]$ and $T[i..x+1]$ are repeating substrings of $T$.
This implies $T[1..i] \in \SUS_S([2, i])$ 
and $T[i..x+2] \in \SUS_S([i, x+1])$.
Similarly, for all $x+3\leq j \leq 2x+2$, 
$T[x+2..j] \in \SUS_S([x+3, j])$ 
and $T[j..2x+3]\in \SUS_S([j, 2x+2])$.
Then, we have $|\mathcal{IS}_T| = 4x+3$.
Hence,
$|\mathcal{IS}_T| - (2-\varepsilon)n = 4x+3-(2-\varepsilon)(2x+3) 
= 2\varepsilon x+3\varepsilon-3 
= 2\varepsilon \lceil 3/(2\varepsilon) \rceil+3\varepsilon-3 
\geq 3\varepsilon > 0$.
\end{proof}

There exists a string for which the bounds of Lemma~\ref{lem:intervalubound} and Lemma~\ref{lem:intervallbound} almost match, namely:
\begin{theorem} \label{thm:intervalbound}
For any $\varepsilon > 0$,
there is a string $T$ such that $(2|T|-|\mathcal{M}_T|)-(2-\varepsilon)|T| \leq 5\varepsilon$.
\end{theorem}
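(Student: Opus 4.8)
The plan is to reuse the very same string that witnesses Lemma~\ref{lem:intervallbound}. Set $x = \lceil 3/(2\varepsilon) \rceil$, take $T = c_1 a^x c_2 a^x c_3$ with $c_1, c_2, c_3, a$ pairwise distinct, and let $n = |T| = 2x+3$. For this $T$, Lemma~\ref{lem:intervallbound} already establishes the lower-bound side. What remains is to pin down the upper-bound quantity $2|T|-|\mathcal{M}_T|$ for this particular $T$, and then to compare it against $(2-\varepsilon)|T|$ by a direct calculation. So the whole argument reduces to determining $|\mathcal{M}_T|$ exactly.

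First I would show that $\mathcal{M}_T = \{[1,1],[x+2,x+2],[2x+3,2x+3]\}$, so $|\mathcal{M}_T| = 3$. Each of $c_1, c_2, c_3$ occurs exactly once and is therefore a unique single-character substring, hence a MUS. Conversely, any substring that avoids all of $c_1, c_2, c_3$ is some power $a^j$, which is repeating when $j \le x$ and does not occur at all when $j > x$; thus it is never a MUS. Finally, any substring of length at least two that does contain some $c_i$ has the single character $c_i$ as a proper substring, and that character is unique rather than repeating, so by the definition of a MUS the longer substring is not minimal. Hence the three single characters are the only MUSs.

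With $|\mathcal{M}_T| = 3$ in hand, the upper-bound expression becomes $2|T|-|\mathcal{M}_T| = 2(2x+3)-3 = 4x+3$, which coincides exactly with the count $|\mathcal{IS}_T| = 4x+3$ computed in Lemma~\ref{lem:intervallbound}; in particular the bound of Lemma~\ref{lem:intervalubound} is attained with equality by $T$. I would then evaluate the claimed gap directly:
\[
(2|T|-|\mathcal{M}_T|) - (2-\varepsilon)|T| = (4x+3) - (2-\varepsilon)(2x+3) = 2\varepsilon x + 3\varepsilon - 3.
\]
Using $x = \lceil 3/(2\varepsilon) \rceil < 3/(2\varepsilon) + 1$ gives $2\varepsilon x < 3 + 2\varepsilon$, whence $2\varepsilon x + 3\varepsilon - 3 < 5\varepsilon$, which is the desired inequality.

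The only genuinely non-routine step is the exact determination of $\mathcal{M}_T$; everything else is arithmetic. The crux there is the observation that the unique single characters $c_i$ block any longer substring through their positions from being minimal, together with the fact that the $a$-runs are too short (length $x$, with $a^x$ occurring twice) to yield a purely-$a$ MUS. Once $|\mathcal{M}_T| = 3$ is secured, the two bounds land within $5\varepsilon$ of each other and the proof is complete.
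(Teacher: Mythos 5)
Your proposal is correct and follows essentially the same route as the paper: take the witness string from Lemma~\ref{lem:intervallbound}, observe $|\mathcal{M}_T|=3$, and bound $2\varepsilon\lceil 3/(2\varepsilon)\rceil+3\varepsilon-3$ by $5\varepsilon$ using $\lceil 3/(2\varepsilon)\rceil\leq 3/(2\varepsilon)+1$. The only difference is that you spell out the verification that $T$ has exactly three MUSs, which the paper merely asserts.
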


\begin{proof}
For any $\varepsilon > 0$, consider the string $T$ of Lemma~\ref{lem:intervallbound}.
We remark that $T$ contains $3$ MUSs, namely, $|\mathcal{M}_T| = 3$.
Hence, we obtain
$(2|T|-|\mathcal{M}_T|)-(2-\varepsilon)|T| = \varepsilon|T|-|\mathcal{M}_T| = \varepsilon|T|-3 = \varepsilon(2\lceil 3/(2\varepsilon) \rceil+3)-3 = 2\varepsilon\lceil 3/(2\varepsilon) \rceil+3\varepsilon-3 \leq 2\varepsilon(3/(2\varepsilon) +1)+3\varepsilon-3 = 5\varepsilon \rightarrow 0\, (\varepsilon\rightarrow 0)$.
\end{proof}

\section{Conclusions and open question} \label{sec:conc}
In this paper, we presented matching upper
and lower bounds for the maximum number of SUSs
for the point SUS problem.
Namely, we proved that any string of length $n$
can contain at most $(3n-1)/2$ SUSs for
the point SUS problem,
and showed that this bound is tight by giving
a string of length $n$ containing $(3n-1)/2$ SUSs.
For a fixed alphabet size $\sigma$,
we also presented a string of length $n$
containing $n+\sigma-2$ SUSs.
Moreover, we showed that any string of length $n$
which contains $m$ MUSs can have
at most $2n-m$ non-trivial interval SUSs,
and that for any $\varepsilon > 0$ there is a string of length $n$
which contains $(2-\varepsilon)n$ non-trivial interval SUSs.

An interesting open question is to show a
non-trivial upper bound of the maximum number of point SUSs
for a fixed alphabet size $\sigma$.
We conjecture that the tight upper bound matches our lower bound $n+\sigma-2$.

\bibliographystyle{abbrv}
\bibliography{ref}

\end{document}